\documentclass[10pt,reqno]{amsart}

 

\usepackage{amsmath,amsthm,amssymb}
\usepackage{enumerate}    
\usepackage{graphics}
\usepackage{multirow}
\usepackage{anysize}     
\usepackage[pdftex]{hyperref}


\marginsize{1.25in}{1.25in }{1in}{1in}



\setcounter{topnumber}{2}
\setcounter{bottomnumber}{2}
\setcounter{totalnumber}{2}     
\setcounter{dbltopnumber}{2}    




%
\makeatletter
\newtheorem*{rep@theorem}{\rep@title}
\newcommand{\newreptheorem}[2]{%
\newenvironment{rep#1}[1]{%
 \def\rep@title{#2 \ref*{##1}}%
 \begin{rep@theorem}}%
 {\end{rep@theorem}}}
\makeatother

\theoremstyle{plain} 
\newtheorem{theorem}{Theorem}
\newreptheorem{theorem}{Theorem}
\newtheorem{lemma}[theorem]{Lemma}
\newreptheorem{lemma}{Lemma}

\newtheorem{proposition}[theorem]{Proposition}
\newreptheorem{proposition}{Proposition}
\newtheorem{fact}[theorem]{Fact}

\theoremstyle{definition}

\theoremstyle{remark}



\numberwithin{equation}{section}

\numberwithin{theorem}{section} 



\newcommand{\abs}[1]{\ensuremath{\left\lvert #1 \right\rvert}}
\newcommand{\norm}[1]{{\ensuremath{\left\lVert#1\right\rVert}}}

\newcommand{\lp}[2]{\norm{#2}_{#1}}
\newcommand{\ltwo}[1]{\lp{2}{#1}}
\newcommand{\lone}[1]{\lp{1}{#1}}

\newcommand{\fronorm}[1]{\norm{#1}_{\mathrm{F}}}
\newcommand{\nucnorm}[1]{\norm{#1}_{2\to 2}^*}
\newcommand{\rsnorm}[1]{\norm{#1}_{\mathrm{2\to \infty}}^*}
\newcommand{\rmnorm}[1]{\norm{#1}_{\mathrm{2\to \infty}}}


\newcommand{\eps}{\ensuremath{\varepsilon}} 


\newcommand{\opt}{{\star}}


\newcommand{\Field}[1]{\ensuremath{\mathbb{#1}}}

\newcommand{\Reals}{\Field{R}}
\renewcommand{\Re}{\Reals}


\newcommand{\vect}[1]{\boldsymbol{#1}}
\renewcommand{\vec}[1]{\vect{#1}}
\newcommand{\mat}[1]{\boldsymbol{#1}} 
 
\newcommand{\Inner}[1]{\ensuremath{\left\langle #1 \right\rangle}}

\newcommand{\ad}{\ensuremath{*}} 

\newcommand{\NP}{\textsf{NP}}

\renewcommand{\opt}{\star}

\DeclareMathOperator{\minimize}{minimize}
\DeclareMathOperator{\maximize}{maximize}

\newcommand{\minprog}[3]{
  \left.
  \begin{array}{ll} 
    \minimize\limits_{#1} & {#2}  \\
    \text{subject to} & {#3} 
  \end{array} 
  \right. 
} 
\newcommand{\maxprog}[3]{
  \left.
  \begin{array}{ll} 
    \maximize\limits_{#1} & {#2}  \\
    \text{subject to} & {#3} 
  \end{array} 
  \right.
}



\newcommand{\Oh}{\mathcal{O}}




\newcommand{\subg}{\ensuremath{\partial}}


\newcommand{\prob}{\ensuremath{\mathbb{P}}}

\newcommand{\expected}{\ensuremath{\mathbb{E}}}
\DeclareMathOperator{\xpcd}{\expected}
\DeclareMathOperator{\std}{std}





\newcommand{\floor}[1]{\ensuremath{\left\lfloor{#1}\right\rfloor}}












\newcommand{\ggeq}{\succcurlyeq}

\newcommand{\gleq}{\preccurlyeq}


\DeclareMathOperator{\rank}{rank}
\DeclareMathOperator*{\argmin}{arg\,min}
\DeclareMathOperator*{\argmax}{arg\,max}


\DeclareMathOperator{\sign}{sgn}
\DeclareMathOperator{\trace}{trace}

\DeclareMathOperator{\median}{median}
\DeclareMathOperator{\MADN}{MADN}
\DeclareMathOperator{\diag}{diag}


\newcommand{\nplone}{\ensuremath{\mathrm{N+L1}}} 
\newcommand{\lld}{\ensuremath{\mathrm{LLD}}} 
\newcommand{\sph}{\ensuremath{\mathrm{sphPCA}}} 
\newcommand{\pca}{\ensuremath{\mathrm{PCA}}}
\newcommand{\mdr}{\ensuremath{\mathrm{MDR}}} 
\newcommand{\md}{\ensuremath{\mathrm{MD}}} 
\newcommand{\pp}{\ensuremath{\mathrm{PP}}}

\DeclareMathOperator{\shrinkrows}{RowShrink}
\DeclareMathOperator{\shrinkspec}{SpecShrink}

\newcounter{algcounter}

  {%
  \end{minipage}%
}

\title[Two Proposals for Robust PCA]{Two Proposals for Robust PCA\\ using Semidefinite Programming}
\date{\today}
\author{Michael McCoy}

\author{Joel A. Tropp} 

\thanks{This work has been supported in part by ONR awards
  N00014-08-1-0883 and N00014-11-1-0025, AFOSR award FA9550-09-1-0643,
  and a Sloan Fellowship. This research was performed while the
  authors were in residence at IPAM.  The authors can be contacted via
  email at \texttt{\{mccoy,jtropp\}@acm.caltech.edu} or postal mail at
  Computing \& Mathematical Sciences, 1200 E. California Blvd., MC
  305-16, California Inst. Technology, Pasadena, CA 91125 }

\begin{document}

\begin{abstract}
  The performance of principal component analysis (PCA) suffers badly
  in the presence of outliers.  This paper proposes two novel
  approaches for robust PCA based on semidefinite programming. The
  first method, \emph{maximum mean absolute deviation rounding}
  (\mdr), seeks directions of large spread in the data while damping
  the effect of outliers.  The second method produces a
  \emph{low-leverage decomposition} (\lld) of the data that attempts
  to form a low-rank model for the data by separating out corrupted
  observations.  This paper also presents efficient computational
  methods for solving these SDPs.  Numerical experiments confirm the
  value of these new techniques.
\end{abstract}

\maketitle

\section{Introduction}
\label{sec:introduction}

Principal component analysis (PCA), proposed in 1933 by
Hotelling~\cite{Hotelling1933}, is a common technique for summarizing
high-dimensional data.  Principal components are designed to identify
directions in which the observations vary most.  As a consequence, PCA
is often used to reduce the dimension of the data.

Statistics based on variance, such as principal components, are highly
sensitive to outliers~\cite{Tukey1960}.  The literature on robust
statistics contains a wide variety of techniques that attempt to
correct this shortcoming~\cite{Huber2009}.  Unfortunately, many of
these approaches are based on intractable optimization problems or
lack a principled foundation.

Our focus in this work is to develop new formulations for robust PCA
that can be solved efficiently using convex programming algorithms.
Our first proposal, which we call \emph{maximum mean absolute
  deviation rounding} (\mdr), exchanges the variance in the definition
of PCA with a function less sensitive to outliers known as the mean
absolute deviation.  Although this formulation leads to a non-convex
optimization problem, we demonstrate that it is possible to
approximate the optimum by relaxing to a semidefinite program and
randomly rounding the solution.  This method can be viewed as a
specific instance of projection-pursuit PCA~\cite{Li1985}.

Our second proposal uses a different semidefinite program to split the
input data into the sum of a low-leverage matrix and a matrix of
corrupted observations.  We refer to this dissection as a
\emph{low-leverage decomposition} (\lld) of the data.  This method is
similar in spirit to the rank-sparsity decomposition of Chandrasekaran
et~al.~\cite{Chandrasekaran2009}.  While preparing this manuscript, we
learned of an independent investigation into this formulation of
robust PCA by Xu et.~al.\cite{Xu2010,Xu2010a}.

We describe algorithms that solve these semidefinite programs
efficiently, and we provide numerical experiments that confirm the
effectiveness of these new techniques.  We begin with a brief overview
of our proposals before laying out the details in
Sections~\ref{sec:mdr} and~\ref{sec:lld}.

\subsection{The Data Model}
\label{sec:datamodel}
Suppose that we have a family $\{\vec x_i\}_{i=1}^n$ of $n$
observations in $p$ dimensions.  We form an $n\times p$ data matrix
$\mat X$ whose rows are the observations.  The observations are
assumed to be centered; that is, $\frac{1}{n}\sum_i \vec x_i \approx
\mathbf{0}$.  While our methods do not explicitly require the data to
be centered, this hypothesis allows us to interpret principal
components as directions of high variance in the data.  We discuss
practical centering approaches in Section~\ref{sec:experiments}.

\subsection{Maximizing the Mean Absolute Deviation}
\label{sec:submdr}

Our first method is designed to mitigate a source of sensitivity in
classical principal component analysis.  The top principal component
$\vec v_\pca$ is defined as a direction of maximum variance in the
data:
\begin{equation}
  \label{eq:pca}
  \vec v_\pca  = 
  \argmax_{\ltwo{\vec v } = 1}
  \sum\nolimits_{i=1}^n \abs{\Inner{ \vec x_i , \vec v}}^2.
\end{equation}
The squared inner products in~\eqref{eq:pca} may lead to outsized
influence of outlying points because squaring a large number results
in a huge number, which can drag the principal component away from the
bulk of the data.  We can reduce this effect by replacing the squared
inner product with a measure of spread that is less sensitive. We
propose the use of the absolute value of the inner product:
\begin{equation}
  \label{eq:mdpca}
 \vec v_{\mathrm{MD}} = \argmax_{\ltwo{\vec v}= 1}
  \sum\nolimits_{i=1}^n \abs{\Inner{\vec x_i, \vec v}},
\end{equation}
where we have added the subscript $\mathrm{MD}$ to indicate that we
have exchanged the variance in equation~\eqref{eq:pca} with a measure
of spread known as the \emph{mean absolute deviation}
(MD)~\cite[p.~2]{Huber2009}.

This revision results in some complications.  The
formulation~\eqref{eq:pca} is an eigenvector problem which can be
solved efficiently.  In contrast, it is \NP-hard to compute $\vec
v_\md$.  Nevertheless, we develop an efficient randomized algorithm
that provably computes an approximate solution to~\eqref{eq:mdpca}.
We call this approach \emph{maximum mean absolute deviation rounding}
(\mdr).

Our main result, Theorem~\ref{thm:mdr},
states that, for any failure probability $\delta > 0$ and loss factor
$\eps > 0$, our algorithm produces a unit-norm vector $\vec v_{\mdr}$
such that
\begin{equation*}
  \sum\nolimits_{i=1}^n  \abs{\Inner{\vec x_i, \vec v_\mdr}} \ge 
  \sqrt{\frac{2}{\pi}}\left(1 -\eps\right)
  \max_{\ltwo{\vec v } = 1} \sum\nolimits_{i=1}^n \abs{\Inner{\vec x_i, \vec v}}.
\end{equation*}
The algorithm requires that we solve one semidefinite program (SDP)
whose size is polynomial in the number of observations.  Since SDPs
are solvable in polynomial time using interior-point methods, our
algorithm is tractable in principle.  In practice, solving SDPs can be
daunting even for moderately sized input data---say, more than 100
observations.  To address this issue, we detail a technique of Burer
and Monteiro~\cite{Burer2003a,Burer2004a} that can usually solve the
SDP efficiently, and in Section~\ref{sec:experiments} we provide some
numerical evidence that this approach succeeds.

We find additional components by greedily restricting the data to a
subspace perpendicular to the previous components and
solving~\eqref{eq:mdpca} again. 

This proposal is not without precedent. A more general formulation appears in
Huber's book~\cite[p.  203]{Huber1981}, and it is now known as
\emph{projection-pursuit PCA} (PP-PCA)~\cite{Li1985}.  We provide
further detail on PP-PCA in Section~\ref{sec:pppca} and discuss the
history of the method in~\ref{sec:pppca_background}.

\subsection{A Low-Leverage Decomposition}
\label{sec:lldintro}

Our second proposal stems from a different interpretation of classical
principal component analysis. Instead of viewing classical principal
components as directions of maximum variance, we can view them as an
optimal low-rank model for the data~\cite{Candes2009}. Suppose $\mat
P_\opt$ is a matrix that solves
\begin{equation*}
  \minprog{}{\fronorm{\mat X - \mat P}}{\rank(\mat P) = T.}
\end{equation*}
The dominant principal components of $\mat X$ are given by the $T$
right singular vectors of $\mat P_\opt$ corresponding with the nonzero
singular values of $\mat P_\opt$.

With real data, one is often faced with the situation where entire
observations are corrupted.  If this is the case, we would still like
to recover a low-rank model.  We can develop as natural formulation
for identifying a low-rank model using the well-known rank
sparsity~\cite{Fazel2002} and group sparsity~\cite{Rao1998a}
heuristics.  We propose to decompose the data matrix as $\mat X = \mat
P_\lld + \mat C_\lld$ by solving the semidefinite program
\begin{equation}
  \label{eq:lldintro}
  \minprog{}{\sum_{i} \sigma_i(\mat P) + \gamma \sum_j \ltwo{\vec c_j}}
 {\mat P + \mat C = \mat X.}
\end{equation}
We have written $\sigma_i(\mat P)$ for the $i$th singular value of
$\mat P$ and $\vec c_i$ for the $i$th row of $\mat C$.

We view the optimal matrix $\mat P_\lld$ as a surrogate for the
low-rank approximation to the uncorrupted data, and the optimal matrix
$\mat C_\lld$ as an approximation of the corrupted data.  The
formulation~\eqref{eq:lldintro} has an interesting property even when
$\mat P_\lld$ is not low-rank or $\mat C_\lld$ is not row-sparse:
$\mat P_\lld$ is guaranteed to be a low-leverage set of observations
in a sense we make precise in Section~\ref{sec:justlld}. As a result,
we refer to $\mat X = \mat P_\lld + \mat C_\lld$ as a
\emph{low-leverage decomposition} (LLD) of the data.  We define the
dominant \lld\ components as the right singular vectors of $\mat
P_\lld$.

This optimization problem is similar to the rank-sparsity
decomposition problem proposed in~\cite{Chandrasekaran2009}; see
also~\cite{Candes2009}.  We discuss these ideas at more length in
Section~\ref{sec:prevwork}.  As this manuscript was being prepared, we
learned of an independent investigation of the
program~\eqref{eq:lldintro} for robust PCA by Xu
et.~al.~\cite{Xu2010,Xu2010a} that provides conditions for recovery of
the support of the corruption and the row-space of the uncorrupted
observations.

\subsection{Road map}
\label{sec:roadmap}
Sections~\ref{sec:mdr} and~\ref{sec:lld} describe our proposals in
more detail, including theoretical guarantees and practical
algorithms.  Section~\ref{sec:prevwork} offers an overview of previous
work on robust PCA, while Section~\ref{sec:experiments} describes
numerical experiments illustrating the performance of our methods in
various settings.  A technical appendix contains the proofs of
supporting results.

\subsection{Notation}
\label{sec:notation}
We work exclusively with real numbers.  The symbols $\prob$ and
$\xpcd$ denote probability and expectation, respectively. We use
$\subg$ to denote the subgradient map.

Bold capital letters denote matrices while bold lower-case letters
denote vectors.  We represent the $i$th row of a matrix $\mat A$ by
$\vec a_i$ and the $j$th entry of a vector $\vec a$ by $a_j$.  The
adjoint of a matrix $\mat A$ is written $\mat A^\ad$.  When referring
to matrix elements, we sometimes use the notation $[\mat A]_{ij}$, and
similarly for vectors we use $[\vec a]_i$.

We use the compact convention for the singular value decomposition
(SVD) of a matrix: when $\mat A$ is rank $r$, we write its SVD as
$\mat A = \mat U \mat \Sigma \mat V^{\ad}$, where $\mat U$ and $\mat
V$ have orthonormal columns, and $\mat \Sigma$ is a non-singular
diagonal matrix whose entries are positive and are arranged in weakly
decreasing order.  The notation $\mat A\ggeq\mat B$ denotes that $\mat
A - \mat B$ is positive semidefinite.

\subsubsection{Norms}
\label{sec:norms}

We denote the $\ell_p$ vector norm as $\norm{\vec u}_p= \left(\sum_i
  \abs{u_i}^p\right)^{1/p}$ for $1 \le p < \infty$ and $\norm{\vec
  u}_\infty = \max_i \abs{u_i}$.  The Frobenius norm of a matrix is
defined by $\fronorm{\mat A}^2 = \Inner{\mat A, \mat A} $, where
$\Inner{\cdot,\cdot}$ represents the standard inner product.  The
Moore--Penrose pseudoinverse of a matrix $\mat A$ is denoted $\mat
A^\dagger$.

We define the $\ell_p$ to $\ell_q$ operator norm and its dual
respectively by
\begin{equation*}
  \norm{\mat A}_{p \to q} = \sup_{\norm{\vec u}_p = 1} \norm{\mat A
  \vec u}_q, \quad \text{ and } \quad \norm{\mat B}_{p \to q}^* = \sup_{\norm{\mat A}_{p
  \to q} = 1} \Inner{\mat B, \mat A}.
\end{equation*}
Table~\ref{tab:normsummary} describes some of the specific operator
norms used in this work.  We  also use the norms $\norm{\mat A}_{2
  \to 1}$ and $\norm{\mat A}_{\infty \to 1}$, which lack such simple
descriptions; see Sections~\ref{sec:mdpca_is_hard}
and~\ref{sec:rounding}.  

The operator norm of the adjoint satisfies $\norm{\mat A^\ad}_{q^* \to
  p^*} = \norm{\mat A}_{p \to q} $ where $p$ and $q$ satisfy the
conjugacy relations $1/p + 1/p^* = 1$ and $1/q + 1/q^* = 1$ with the
convention $1/\infty =0$.

\renewcommand{\arraystretch}{1.3}
\begin{table}[htbp]
  \centering
  \caption{Summary of the norms used in this work.}
  \begin{tabular}{|c|c|c|}
    \hline
    Norm & Description & Description of Dual \\
    \hline
    $\norm{\mat A}_{2\to 2}$    & Maximum singular value of $\mat A$ & Sum of the singular values of $\mat A$ \\
    \hline
    $\norm{\mat A}_{2 \to \infty}$ & Maximum $\ell_2$ row norm of $\mat A$ & Sum of the $\ell_2 $ row norms  of $\mat A$\\
    \hline
    $\norm{\mat A}_{1 \to \infty}$    & Maximum absolute entry of $\mat A$ & Sum of the absolute entries of $\mat A$ \\
    \hline
  \end{tabular}
  \label{tab:normsummary}
\end{table}

\section{Maximum Mean Absolute Deviation Rounding}
\label{sec:mdr}

Our first method is based on the classical interpretation of the top
principal component as the direction of maximum empirical variance in
multidimensional data.  It has long been recognized that the variance
is highly sensitive to outliers in the data~\cite{Tukey1960}.  The
field of robust statistics has reacted by developing and analyzing
robust measures of spread known as robust scales; see \cite[Ch.
5]{Huber2009} or~\cite[Sec. 2.5]{Maronna2006}.  This literature
describes a generic method for determining robust principal components
by replacing the variance with a robust measure of scale.  Li and
Chen~\cite{Li1985} published the first investigation of this under the
name \emph{projection-pursuit PCA} (PP-PCA).  Our proposal is a
specific instance of PP-PCA with the mean absolute deviation
scale~\eqref{eq:md}.  We show that this formulation is computationally
intractable, but we develop an algorithm that provably approximates
its solution.  To our knowledge, this is the first rigorous algorithm
for PP-PCA with a robust scale.

\subsection{Scales}
\label{sec:robscales}

A \emph{scale} is a function that measures the spread of
one-dimensional data~\cite[Ch. 5]{Huber2009}.  By far, the most common
scale is the empirical standard deviation, defined\footnote{One
  usually defines scales so that they are unbiased estimates of the
  sample standard deviation when the data is drawn from a normal
  distribution.  We are more interested in the direction of maximal
  scale rather than the value, so we can safely ignore the
  normalization factor.} as
\begin{equation*}
  \std(\vec y) = \left(\sum\nolimits_i y_i^2\right)^{1/2}  = \ltwo{\vec y},
\end{equation*}
where we we assume the data $\vec y$ is centered.  Of course, the
standard deviation is not the only way to measure the spread of the
data.  An alternative proposal~\cite[p. 2]{Huber2009} is the
\emph{mean absolute deviation} (MD).  For centered data $\vec y$, the
MD scale is defined as
\begin{equation}\label{eq:md}
  \md(\vec y)  = \sum_i \abs{y_i} = \lone{\vec y}.
\end{equation}
More generally, a scale is a function $S:\Re^n \to \Re$ such that
$S(\alpha \vec y) =\abs{\alpha} S(\vec y)$.  Scales are typically
chosen so that they are less sensitive to outliers than the standard
deviation.  The robust statistics literature focuses on scales that
have a positive breakdown point: the value of the scale cannot be
arbitrarily corrupted by nefariously chosen observations, so long as
the fraction of bad observations in the entire data set is small.
Although the mean absolute deviation has a breakdown point of zero, it
exhibits more efficient behavior than the standard deviation under
contaminated distributions~\cite{Tukey1960}.

\subsubsection{Scales for multivariate data}
\label{sec:highscale}

We extend the definition of scales to multivariate data by considering
the scale of the data in a given direction.  The projection of the
rows of $\mat X$ onto the unit direction $\vec u$ is given by the
product $\mat X \vec u$.  Note that if $\mat X$ is centered in the
sense of Section~\ref{sec:datamodel}, then the projection $\mat X \mat
u$ is also centered by linearity.  We define the scale of $\mat X$ in
the direction $\vec u$ to be the scale of the projected data $S(\mat X
\vec u)$.

As noted in~\cite{Huber1981}, this definition is equivariant under an
orthogonal change of basis: for any $\mat Q$ with $\mat Q^\ad \mat Q =
\mathbf{I}$, the scale of $\mat X$ in the direction $\vec u$ is
equal to the scale of $\mat X \mat Q^\ad$ in the direction $\mat Q
\vec u$.

\subsection{Projection-Pursuit PCA}
\label{sec:pppca}

Classically, the top principal component is defined as the direction
where the empirical standard deviation in the data is largest:
\begin{equation}
  \label{eq:varianceinv}
  \vec v_\pca = \argmax_{\ltwo{\vec v} = 1} \; \std(\mat X \vec v).
\end{equation}
A natural approach for finding robust components is to replace the
standard deviation in~\eqref{eq:varianceinv} with a robust scale
$S(\cdot)$, so that the robust component is the direction of maximum
robust scale
\begin{equation*}
\vec v_\pp = \argmax_{\ltwo{\vec v} = 1} S(\mat X \vec v).  
\end{equation*}
We define further robust components inductively by adding
orthogonality constraints:
\begin{equation}
  \label{eq:orthRestPPPCA}
  \vec v_\pp^{(k)} = \argmax_{\substack{\ltwo{\vec v} = 1 \\ 
      \vec v \perp \vec v_\pp^{(j)} \; \forall\, j < k}} 
  S(\mat X \vec v).
\end{equation}
This greedy method of constructing orthogonal components based on
robust scales goes by the name projection-pursuit PCA.  This scheme
was originally proposed by Huber~\cite[p. 203]{Huber1981}, but was
first studied in detail by Li and Chen~\cite{Li1985}.  PP-PCA reduces
to PCA when the scale is given by the standard deviation due to the
variational characterization of eigenvectors by Courant and Fischer.

To implement the PP-PCA method, one only needs a method that finds the
first component.  We discuss how to enforce the orthogonality
constraints in Section~\ref{sec:orthrest}.

\subsection{PP-PCA with the MD Scale is \textsf{NP}-Hard}
\label{sec:mdpca_is_hard}

Finding the top principal component is an eigenvector problem that
amounts to computing the direction where the norm $\norm{\cdot}_{2\to
  2}$ is achieved.  Similarly, PP-PCA with the MD scale amounts to
finding a vector that achieves an operator norm.  Indeed, the problem
$\vec v_{\md} = \argmax_{\ltwo{\vec v}=1} \lone{\mat X \vec v}$ is
equivalent to the problem
\begin{equation}
  \label{eq:mdop}
  \text{find } \ltwo{\vec v_\md }=1 \text{ such that } 
 \lone{\mat X \vec v_\md} = \norm{\mat X}_{2\to 1}.
\end{equation}
Unfortunately, exchanging the $\ell_2$ norm for the $\ell_1$ norm
leads to an \NP-hard computational problem.  To see this, we require
the following result, which we establish in the Appendix.
\begin{fact}\label{prop:inf1factor}
  For each matrix $\mat X$, the identity $\norm{\mat X}_{2\to 1}^2 =
  \norm{\mat X \mat X^\ad}_{\infty\to 1}$ holds.
\end{fact}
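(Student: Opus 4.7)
The plan is to rewrite both sides of the claimed identity as optimization problems over the unit $\ell_\infty$-ball and then match them using the positive semidefiniteness of $\mat X \mat X^\ad$.

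First I would unfold $\norm{\mat X}_{2\to 1}$ via the duality between $\ell_1$ and $\ell_\infty$. Writing $\lone{\vec w} = \sup_{\linf{\vec s}\le 1}\langle \vec s,\vec w\rangle$ and interchanging the two suprema yields
\begin{equation*}
\norm{\mat X}_{2\to 1} \;=\; \sup_{\ltwo{\vec v}=1}\,\sup_{\linf{\vec s}\le 1}\Inner{\vec s,\mat X\vec v}
\;=\;\sup_{\linf{\vec s}\le 1}\ltwo{\mat X^{\ad}\vec s},
\end{equation*}
where the inner supremum over $\vec v$ is handled by Cauchy--Schwarz. Squaring gives
\begin{equation*}
\norm{\mat X}_{2\to 1}^{2} \;=\; \sup_{\linf{\vec s}\le 1}\vec s^{\ad}\mat X\mat X^{\ad}\vec s,
\end{equation*}
a quadratic optimization over the $\ell_\infty$-ball driven by the Gram matrix.

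Next I would expand $\norm{\mat X\mat X^{\ad}}_{\infty\to 1}$ as a bilinear optimization. Setting $\mat M = \mat X\mat X^{\ad}$, the same $\ell_1$/$\ell_\infty$ duality gives
\begin{equation*}
\norm{\mat M}_{\infty\to 1} \;=\; \sup_{\linf{\vec s}\le 1,\,\linf{\vec t}\le 1}\vec t^{\ad}\mat M\vec s.
\end{equation*}
Comparing the two displays reduces the fact to showing that the bilinear maximum equals the diagonal (quadratic) maximum on the $\ell_\infty$-ball.

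The key step, and the only place where structure matters, is the diagonalization argument. Since $\mat M\ggeq\mat 0$, the inequality $(\vec s-\vec t)^{\ad}\mat M(\vec s-\vec t)\ge 0$ rearranges to
\begin{equation*}
\vec t^{\ad}\mat M\vec s \;\le\; \tfrac{1}{2}\bigl(\vec s^{\ad}\mat M\vec s + \vec t^{\ad}\mat M\vec t\bigr),
\end{equation*}
so the bilinear supremum over $\linf{\vec s},\linf{\vec t}\le 1$ is bounded above by the supremum of $\vec s^{\ad}\mat M\vec s$ alone; the reverse inequality is immediate by taking $\vec t=\vec s$. Combining this with the two displays above yields the desired identity $\norm{\mat X}_{2\to 1}^{2}=\norm{\mat X\mat X^{\ad}}_{\infty\to 1}$.

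I do not anticipate a serious obstacle: each step is a short duality or PSD manipulation. The only subtle point is recognizing that the PSD hypothesis is precisely what lets one discard the second optimization variable in the bilinear form, turning a general $\infty\to 1$ norm into a quadratic program, which is the content of the fact.
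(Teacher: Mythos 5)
Your proof is correct and follows essentially the same route as the paper's: both reduce $\norm{\mat X\mat X^\ad}_{\infty\to 1}$ to the bilinear maximization $\sup_{\linf{\vec s},\linf{\vec t}\le 1}\vec t^\ad\mat X\mat X^\ad\vec s$ and then argue that, by positive semidefiniteness, the maximum is attained on the diagonal $\vec t=\vec s$, leaving the quadratic program whose value is $\norm{\mat X^\ad}_{\infty\to 2}^2=\norm{\mat X}_{2\to 1}^2$. The only cosmetic difference is that you justify the collapse to the diagonal via $(\vec s-\vec t)^\ad\mat M(\vec s-\vec t)\ge 0$ where the paper invokes the equality case of Cauchy--Schwarz applied to $\Inner{\mat X^\ad\vec w,\mat X^\ad\vec y}$; these are interchangeable.
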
 
 
Rohn~\cite{Rohn2000a} shows that there exists a class of
well-conditioned positive matrices $\mathcal{M}$ such that the
existence of a polynomial-time algorithm for accurately computing
$\norm{\mat M}_{\infty \to 1}$ for all $\mat M \in \mathcal{M}$
implies $\mathsf{P}=\mathsf{NP}$.  Since we can factor positive
matrices $\mat M = \mat R \mat R^\ad$ in polynomial time using, for example,
a Cholesky factorization, the existence of an accurate polynomial-time
algorithm that computes $\norm{\mat R}_{2\to 1}^2$ for any matrix
$\mat R$ implies that $\mathsf{P}=\mathsf{NP}$.

The observation that Equation~\eqref{eq:orthRestPPPCA} is \NP-hard to
solve for the specific choice $S(\cdot) = \lone{\cdot}$ has serious
implications for existing PP-PCA algorithms.  The algorithms available
in the literature for PP-PCA~\cite{Croux2007,Croux2005,Li1985} are
general schemes that claim to work for any choice of scale $S$.  As
a result, none of these algorithms can provide both accurate and
efficient solutions to the PP-PCA problem.  This issue is not merely
theoretical because these algorithms tend to perform poorly in
practice.  We discuss this point further in
Section~\ref{sec:pppca_background}.

\subsection{Approximating the $\ell_2\to \ell_1$ Norm using Randomized
  Rounding}\label{sec:rounding}

Although it is \NP-hard to compute the $\ell_2\to\ell_1$ norm, it is
possible to approximate its value efficiently.  This fact is a
consequence of the little Grothendieck
theorem~\cite[Sec. 5b]{Pisier1986a}, but the algorithm depends on
ideas of Nesterov~\cite{Nesterov1998}, a technique of Burer and
Monteiro~\cite{Burer2003a,Burer2004a}, and a new factorization step.

\subsubsection{The semidefinite relaxation of the $\ell_2 \to \ell_1$ norm}
\label{sec:sdprelax}

Before describing our algorithm, we begin by showing how the
computation of $\ell_2\to \ell_1$ operator norm can be relaxed to a
semidefinite program.  First, apply Fact~\ref{prop:inf1factor} to
change the computation of the $\ell_2 \to \ell_1$ norm to the
computation of the $\ell_\infty\to\ell_1$ norm:
\begin{equation}
  \label{eq:2to1->1toinf}
  \norm{\mat X}_{2\to 1}^2 = \norm{\mat X\mat X^\ad}_{\infty \to 1}
  = \max_{\norm{\vec y}_{\infty}  = 1}
  \vec y^\ad \mat X \mat X^\ad \vec y.
\end{equation}
The second identity above follows from the proof of
Fact~\ref{prop:inf1factor}; see also~\cite[Prop. 1]{Rohn2000a}.
Interpreting the quadratic form on the right hand side
of~\eqref{eq:2to1->1toinf} as a trace implies that $\norm{\mat
  X}_{2\to 1}^2$ is the optimal value of the (non-convex) program
\begin{equation}
  \label{eq:relax}
  \maxprog{}{ \trace(\mat X \mat X^\ad \mat Z) }{ \mat Z = \vec y \vec y^\ad, 
    \quad [\mat Z]_{ii} = 1  \text{ for all } i. }
\end{equation}
Relaxing the rank one constraint $\mat Z = \vec y \vec y^\ad$ to a
positive-semidefinite constraint $\mat Z \ggeq \mathbf{0}$ leads to
the SDP
\begin{equation}
  \label{eq:relax2}
    \maxprog{}{\trace(\mat X \mat X^\ad \mat Z)}
    {\mat Z \ggeq \mathbf{0},\quad [\mat Z]_{ii} = 1 
        \text{ for all } i.}
\end{equation}
It follows that $\norm{\mat X}_{2\to 1} \le \alpha_\opt$, where
$\alpha_\opt^2$ is the optimal value of~\eqref{eq:relax2}.  Moreover,
Grothendieck's inequality for positive-semidefinite matrices implies that
\begin{equation}
  \label{eq:alphabdd}
  \alpha_\opt^2 \le \frac{\pi}{2} \norm{\mat X \mat X^\ad}_{\infty \to 1},
\end{equation}
where this inequality is asymptotically the best
possible~\cite[Sec. 4.2]{Alon2006}.  Thus, $\alpha_\opt$ is within a
factor of $\sqrt{\pi/2}<1.26$ of the true value of the norm
$\norm{\mat X}_{2\to 1}$.

\subsection{The MDR Algorithm}
\label{sec:mdralg}

The fact that equation~\eqref{eq:relax2} gives us a good upper bound
on the \emph{value} of $\norm{\mat X}_{2\to 1}$ is of secondary
importance.  We would prefer an approximation for $\vec v_\md$
in~\eqref{eq:mdop}, that is, a vector $\vec v_\opt$ with $\ltwo{\vec
  v_\opt} = 1$ such that $\norm{\mat X \vec v_\opt} \approx \norm{\mat
  X}_{2\to 1}$.  We accomplish this goal via a randomized procedure
that rounds an optimal solution $\mat Z_\opt$ to~\eqref{eq:relax2}
back to a vector $\vec v_\opt$.  The entire procedure is detailed in
Algorithm~\ref{alg:mdr}.

The first step of the algorithm solves the SDP
relaxation~\eqref{eq:relax2}. In Step~\ref{step:mdr_random}(a), we
draw a random $\vec y \in \{\pm 1\}^n$ with $\xpcd \lone{\mat X \mat
  X^\ad \vec y } = 2\alpha_\opt^2/\pi $. This procedure is well
understood~\cite{Nesterov1998}.  The method in
Step~\ref{step:mdr_random}(b) that we use to compute $\vec v$ from
$\vec y$ is novel, and it requires a proof of correctness, which
appears in the Appendix.  By choosing the best random outcome,
Step~\ref{step:mdr_max} limits the probability that our method fails
to provide a reasonable approximation.

The following theorem describes the behavior of
Algorithm~\ref{alg:mdr}.
\begin{theorem}
  \label{thm:mdr}
  Suppose that $\mat X$ is an $n\times p$ matrix, and let $K$ be the
  number of rounding trials.  Let $(\vec v_\opt, \alpha_\opt)$ be the
  output of Algorithm~\ref{alg:mdr}.  Then $\alpha_\opt \ge \norm{\mat
    X}_{2\to 1}$. Moreover, for $\theta < 1$, the inequality
  \begin{equation}\label{eq:mdrineq}
    \lone{\mat X \vec v_\opt } > \theta \sqrt{\frac{2}{\pi}}
    \alpha_\opt    
  \end{equation}
  holds except with probability $\mathrm{e}^{-2K(1-\theta^2)/\pi}$.
\end{theorem}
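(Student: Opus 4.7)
The plan is to separately verify the two claims of Theorem~\ref{thm:mdr}.

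For the first claim, $\alpha_\opt \ge \norm{\mat X}_{2\to 1}$, observe that \eqref{eq:relax2} is obtained from the non-convex program \eqref{eq:relax} by relaxing the rank-one constraint $\mat Z = \vec y \vec y^\ad$ to the semidefinite cone $\mat Z \ggeq \mathbf{0}$. Consequently $\alpha_\opt^2$ is at least the optimal value of \eqref{eq:relax}, which by \eqref{eq:2to1->1toinf} equals $\norm{\mat X}_{2\to 1}^2$; taking square roots gives the claim.

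For the probabilistic claim, I would reduce to a single rounding trial and then amplify over the $K$ independent trials. Fix one trial and let $W = \lone{\mat X \vec v}^2$. The argument rests on two ingredients. First, the \emph{upper bound} $W \le \norm{\mat X}_{2\to 1}^2 \le \alpha_\opt^2$, which is immediate from $\ltwo{\vec v}=1$ and the first claim. Second, a \emph{lower bound on the expectation}: the novel factorization in Step~(b) produces $\vec v$ satisfying $\lone{\mat X \vec v}^2 \ge \lone{\mat X \mat X^\ad \vec y}$, and combining with the sign-rounding guarantee from Step~(a), namely $\xpcd\lone{\mat X \mat X^\ad \vec y} = 2\alpha_\opt^2/\pi$, yields $\xpcd W \ge 2\alpha_\opt^2/\pi$.

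With $W$ supported in $[0,\alpha_\opt^2]$ and mean at least $2\alpha_\opt^2/\pi$, a one-line Paley--Zygmund estimate at the threshold $t = \theta^2\cdot 2\alpha_\opt^2/\pi$ gives
\[
\frac{2\alpha_\opt^2}{\pi} \;\le\; \xpcd W \;\le\; t + (\alpha_\opt^2 - t)\,\prob(W>t),
\]
which rearranges to $\prob(W>t) \ge 2(1-\theta^2)/\pi$ (the denominator $1 - 2\theta^2/\pi$ is discarded since it is at most $1$). Thus the event $\lone{\mat X \vec v} > \theta \sqrt{2/\pi}\,\alpha_\opt$ occurs in a single trial with probability at least $2(1-\theta^2)/\pi$. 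Because the $K$ trials are independent and the algorithm returns the best outcome, the elementary inequality $1-x\le \mathrm{e}^{-x}$ bounds the probability that every trial fails by $\bigl(1 - 2(1-\theta^2)/\pi\bigr)^K \le \mathrm{e}^{-2K(1-\theta^2)/\pi}$, which is precisely \eqref{eq:mdrineq}.

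The substantive obstacle here is justifying the factorization-step bound $\lone{\mat X \vec v}^2 \ge \lone{\mat X \mat X^\ad \vec y}$; this is the genuinely new piece of Algorithm~\ref{alg:mdr} and its proof must be deferred to the appendix. The companion expectation bound for the sign rounding follows from the classical Nesterov analysis together with Schur-product positivity of the arcsine series applied to the PSD matrix $\mat X \mat X^\ad$. Granted those two ingredients, everything else reduces to the Paley--Zygmund calculation and a standard independence argument.
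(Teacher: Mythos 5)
Your argument tracks the paper's proof essentially step for step: the relaxation chain for $\alpha_\opt \ge \norm{\mat X}_{2\to 1}$ (which the paper itself notes follows from Section~\ref{sec:sdprelax} together with Fact~\ref{prop:inf1factor}), a lower bound of $\tfrac{2}{\pi}\alpha_\opt^2$ on $\xpcd\lone{\mat X\vec v^{(k)}}^2$, the deterministic upper bound $\lone{\mat X\vec v^{(k)}}^2 \le \norm{\mat X}_{2\to 1}^2 \le \alpha_\opt^2$, a Paley--Zygmund estimate giving per-trial success probability at least $\tfrac{2}{\pi}(1-\theta^2)$, and amplification via $1-x\le \mathrm{e}^{-x}$ over the $K$ independent trials. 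All of those computations are correct.

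The one step you defer --- the inequality $\lone{\mat X\vec v}^2 \ge \lone{\mat X\mat X^\ad\vec y}$ for $\vec v = \mat X^\ad\vec y/\ltwo{\mat X^\ad\vec y}$ --- is precisely the novel content of Step~\ref{step:mdr_random}(b) and is the crux of the paper's appendix proof, so leaving it unproved is the only real gap. It is, however, a two-line argument you should include: by $\ell_1$--$\ell_\infty$ duality,
\[
\lone{\mat X\vec v} \;=\; \frac{1}{\ltwo{\mat X^\ad\vec y}}\,\max_{\vec w\in\{\pm 1\}^n}\Inner{\vec w,\mat X\mat X^\ad\vec y}
\;=\; \frac{\lone{\mat X\mat X^\ad\vec y}}{\ltwo{\mat X^\ad\vec y}},
\]
and the choice $\vec w=\vec y$ shows $\lone{\mat X\mat X^\ad\vec y}\ge \vec y^\ad\mat X\mat X^\ad\vec y=\ltwo{\mat X^\ad\vec y}^2$, whence $\lone{\mat X\vec v}^2 = \lone{\mat X\mat X^\ad\vec y}^2/\ltwo{\mat X^\ad\vec y}^2 \ge \lone{\mat X\mat X^\ad\vec y}$. (The paper instead records the equivalent-in-effect bound $\lone{\mat X\vec v}\ge\ltwo{\mat X^\ad\vec y}$ and invokes $\xpcd\ltwo{\mat X^\ad\vec y}^2\ge\tfrac{2}{\pi}\alpha_\opt^2$ from Lemma~\ref{lemma:alon-naor}; your route through $\xpcd\lone{\mat X\mat X^\ad\vec y}\ge\tfrac{2}{\pi}\alpha_\opt^2$ is the same rounding guarantee --- note it is an inequality, not the equality you quote, but only the lower bound is used.) With that display inserted, your proof is complete and coincides with the paper's.
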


In Theorem~\ref{thm:mdr}, it may be more natural to specify a failure
probability $\delta >0$ and approximation loss $\eps = 1-\theta>0$
instead of a repetition number $K$.  In this case, simple algebra
shows that $\lone{\mat X \vec v_\opt} > (1-\eps)\sqrt{2/\pi}
\norm{\mat X}_{2\to 1}$ except with probability $\delta$, so long as
\[K \ge \frac{\pi}{2}\cdot \frac{\log(1/\delta)}{\eps(2-\eps)} =
\Oh\left(\eps^{-1}\log(\delta^{-1})\right).\] In particular, the
choice $K = 94$ implies that $\lone{\mat X \vec v_\opt} > 0.75
\norm{\mat X}_{2\to 1}$ with probability at least $0.999$.

We use the approximation ratio $\rho = \lone{\mat X \vec
  v_\opt}/\alpha_\opt $ to measure the quality of the optimal solution
in Section~\ref{sec:experiments}.  Although Theorem~\ref{thm:mdr} only
guarantees that we can make $\rho$ as close to $\sqrt{2/\pi} > 0.79$
as we desire, in practice we typically see a $0.95$ approximation
ratio or higher.  This observation does not indicate that the analysis
of the algorithm is loose; it follows directly from~\cite[Sec.
4.2]{Alon2006} that this bound is asymptotically tight for a class of
examples as $n\to \infty$.

\begin{figure}[t]
  \framebox{
    \begin{minipage}{0.9\columnwidth}
      \refstepcounter{algcounter}%
      \noindent{\large \textbf{Algorithm~\arabic{algcounter}:}} 
      \textbf{Maximum Mean Absolute Deviation Rounding} \label{alg:mdr}\\
      \textsc{Input:} An $n\times p$ matrix $\mat X$;
      repetition count $K$.\\
      \textsc{Output:} A $p\times 1$ unit-norm vector $\vec v_\opt$
      and an optimal value $\alpha_\opt$.  \vspace{5pt}
      \begin{enumerate}
        \setlength{\itemsep}{0pt}
        \setlength{\parskip}{0pt}
        \setlength{\parsep}{0pt}        
      \item  Find an $\mat R_\opt$ such that $\mat Z_\opt=\mat R_\opt
        \mat R_\opt^{\ad}$ solves the semidefinite program
        \begin{equation}\label{eq:mdrsdp}
          \begin{array}{ll} \maximize & \trace(\mat Z \mat X \mat X^{\ad}) \\
            \text{subject to} & \mat Z \ggeq \mathbf{0}, \quad [\mat Z]_{ii} = 1 
            \text{ for } i = 1,\dotsc, n
          \end{array}
        \end{equation}
        Set $\alpha_\opt$ to be the square root of the optimal value:
        $\alpha_\opt=\sqrt{\trace(\mat Z_\opt \mat X \mat X^\ad)}$. \label{alg:sdp}
      \item For each $k = 1,\dotsc,K$, do \label{step:mdr_random}
        \begin{enumerate}[(a)]
        \item Set $\vec y^{(k)} = \sign( \mat R_\opt \vec g^{(k)})$,
          where $\vec g^{(k)}$ is an $n\times 1$ standard normal
          random vector.
        \item Set ${\vec v^{(k)}} = \mat X^{\ad} \vec y^{(k)} /
          \ltwo{\mat X^{\ad}\vec y^{(k)}}$.
        \end{enumerate}
      \item \label{step:mdr_max} Set $\vec v_\opt = \argmax_{k = 1,\dotsc, K}
        \lone{\mat X \vec v^{(k)}}$.
      \end{enumerate}
    \end{minipage}}
\end{figure}

\subsection{Implementation of Algorithm~\ref{alg:mdr}}
\label{sec:mdr_imp}

For a fixed iteration count $K$, the complexity of
Algorithm~\ref{alg:mdr} is typically dominated by Step~\ref{alg:sdp}.
When applied to~\eqref{eq:mdrsdp}, modern interior-point methods are
guaranteed to compute the optimal objective value $\alpha_\opt$ and
optimal point $\mat Z_\opt$ accurately in polynomial time.  The factor
$\mat R_\opt$ is determined using a Cholesky factorization of $\mat
Z_\opt$.  In practice, interior-point methods are very slow for
large-scale problems, so we prefer an algorithm of Burer and
Monteiro~\cite{Burer2004a}.
 
The algorithm of Burer and Montiero never forms the semidefinite
matrix $\mat Z$; rather it operates directly with the factor $\mat R$.
We express the objective function of~\eqref{eq:mdrsdp} in terms of
$\mat R$ as $\trace( \mat R \mat R^{\ad} \mat X \mat X^{\ad}) =
\fronorm{\mat X^{\ad} \mat R}^2$.  The constraints $[\mat Z]_{ii} = 1$
are equivalent to constraints on the rows of $\mat R$ of the form
$\ltwo{\vec r_i} = 1$.

We implicitly enforce these row constraints by incorporating them into
the objective function as in~\cite[Sec. 4.2]{Burer2003a}.  The
resulting unconstrained, nonconvex optimization problem takes the form
\begin{equation}
  \label{eq:bm_opt}
 \maximize_{\mat R} \fronorm{\mat X^{\ad} \mathcal{N}(\mat R)}^2,
\end{equation}
where $\mathcal{N}(\mat R)$ denotes the operator that normalizes the
rows of $\mat R$, that is, $[\mathcal{N}(\mat R)]_{ij} = [\vec r_i]_j
/\ltwo{\vec r_i}$.

We then apply a conjugate gradient algorithm to maximize the
unconstrained objective in~\eqref{eq:bm_opt}.  Our particular
implementation uses the algorithm of Hager and Zhang~\cite{Hager2006},
which we have found to work well in our experiments.  We refer to our
online code for the choice of parameters in this conjugate gradient
algorithm~\cite{McCoy2010b}.

This factorization technique for solving~\eqref{eq:mdrsdp} is
advantageous because it reduces the dimension of the problem.  The
paper~\cite{Burer2004a} shows that restricting $\mat R$ to be an
$n\times k$ matrix for $k = \Oh(\sqrt{n})$ suffices to solve this
problem exactly.  To be precise, when $k = \floor{(1+\sqrt{9 + 8
    n})/2}$ any \emph{local} minimum $\mat R_\opt \in \Re^{n \times
  k}$ of~\eqref{eq:bm_opt} gives a \emph{global} minimum $\mat
Z_\opt$ of~\eqref{eq:mdrsdp} via the map $\mat Z_\opt = \mat R_\opt
\mat R_\opt^{\ad}$, provided a mild technical condition\footnotemark\ 
holds.

\footnotetext{Specifically, the objective function $\trace(\mat Z \mat
  X\mat X^{\ad})$ must not be constant along a face of the feasible
  set.}

\subsubsection{Orthogonal Restriction}\label{sec:orthrest}
Algorithm~\ref{alg:mdr} only approximates the first principal
component in~\eqref{eq:mdpca}.  In order to approximate the $k$th
robust principal component for $k>1$, we define a new matrix $\mat
X_k$ by restricting the rows of $\mat X$ to the subspace
perpendicular to the span of $\vec v_1,\dotsc,\vec v_{k-1}$.  Ignoring
numerical stability, we can inductively define
\begin{equation}\label{eq:naiveorth}
  \mat X_k = \mat X_{k-1} - \mat X \vec v_{k-1} \vec v_{k-1}^{\ad} 
  = \mat X \Bigl( \mathbf{I}- \sum\nolimits_{j=1}^{k-1} \vec v_j \vec v_j^\ad\Bigr),
\end{equation}
which ensures each row of $\mat X$ is orthogonal to the previous
components $\vec v_{j}$ for $j<k$.  We then apply
Algorithm~\ref{alg:mdr} to the restricted matrix $\mat X_k$ to produce
the component $\mat v_k$.  Since the output $\vec v_\opt$ of
Algorithm~\ref{alg:mdr} is a linear combination of the rows of the
input matrix by Step~\ref{step:mdr_random}(b), this iterative
procedure ensures that $\mat v_k$ is perpendicular to the previous
components.

In practice, the implementation can be done using Householder
reflections as in~\cite{Croux2005}; see~\cite{Stoer2002} for further
background on the implementation of Householder transformations.
Householder reflections are more numerically stable than the na\"ive
method~\eqref{eq:naiveorth}.  Moreover, they take full advantage of
the fact that we are only searching over a $p-k+1$ dimensional
subspace by reducing the dimension of $\mat X_{k}$ to
$n\times(p-k+1)$.

\subsection{Extending the Rounding to Multiple Components}
\label{sec:extensions}
We have also attempted to extract a collection of robust components
simultaneously by solving a single semidefinite program.  That is, we
would like to solve the problem
\begin{equation}
  \label{eq:allatonce}
  \maxprog{}{\sum_{i=1}^T \norm{\mat X \vec v_i}_1 }
  { \Inner{\vec v_i,\vec v_j } = \delta_{ij}},
\end{equation}
where $\delta_{ij}$ is the Kronecker delta function. When $T=1$,
equation~\eqref{eq:allatonce} is equivalent with~\eqref{eq:mdpca}.
When $T>1$, the restriction $\Inner{\vec v_i,\vec v_j} = \delta_{ij}$
ensures that the optimum occurs at an orthogonal set of unit vectors.

We can rephrase this optimization problem by the equivalent
quadratically constrained quadratic program
\begin{equation}
  \label{eq:qcqp}
  \maxprog{}{\sum_{i=1}^n \vec w_i^{\ad} \mat X \vec v_i}
  {
    \diag(\vec w_i \vec w_i^{\ad} ) = 1, \quad
    \Inner{\vec v_i,\vec v_j } = \delta_{ij}}
\end{equation}
The diagonal restrictions on $\vec w_i$ ensure that $\vec w_i \in
\{\pm 1\}^n$ for each $i = 1,\dotsc, n$.  The nonconvex
problem~\eqref{eq:qcqp} can be approximated via a semidefinite
relaxation proposed in~\cite{Nemirovski2007}.  The results
of~\cite{So2009} imply that the optimal value of this relaxation is
guaranteed to be larger than the optimal value of~\eqref{eq:allatonce}
by no more than a logarithmic factor.  The rounding procedure does not
produce orthogonal vectors, so we need to apply an additional
orthogonalization step to achieve feasibility
for~\eqref{eq:allatonce}. Empirically, we have found that the
orthogonalization increases the objective value over the standard
rounding, so it appears that there is no loss in applying this
procedure.

Unfortunately, this method does not appear to be competitive with the
projection pursuit method.  The vectors we find by coupling
Algorithm~\ref{alg:mdr} with the orthogonal pursuit of
Section~\ref{sec:orthrest} are feasible for~\eqref{eq:qcqp} and
typically provide a larger objective value than rounding coupled with
post-processing orthogonalization.  A better rounding procedure for
this type of relaxation may prove more effective than the
projection-pursuit approach; this is a direction for further research.

\section{The Low-Leverage Decomposition}
\label{sec:lld}

Our second method is derived from the interpretation of principal
component analysis as a matrix approximation problem.  When the
observations are drawn from a highly correlated family, the singular
values of the data matrix $\mat X$ tend to decay rapidly.  If this is
the case, then the matrix $\mat X$ is well approximated by a low-rank
matrix $\mat P$.

It is rare that a large data set can be compiled without error, but it
is often the case that the errors only affect a subset of the
observations.  We can model these errors through a multi-population
model.  Suppose that the bulk of the observations is well-explained by
a low-rank model while the remainder come from another population or
are corrupted by measurement noise.  A prudent approach to robust
principal component analysis would first separate the corrupted data
from the uncorrupted data before attempting to recover a low-rank
model.  When the corrupted rows are unknown, this task may seem
daunting.

To accomplish this task, we propose a semidefinite program that
decomposes the input $\mat X$ into two matrices:
\begin{equation}\label{eqn:lld}
  \minprog{(\mat P,\mat C)}{\nucnorm{\mat P}+\gamma \rsnorm{\mat C}}{
  \mat P+ \mat C=\mat X.}
\end{equation}
The norm $\nucnorm{\mat P}$ is the sum of the singular values of $\mat
P$ and is known to promote low-rank solutions~\cite{Fazel2002}, while
$\rsnorm{\mat C}$ is the sum of the $\ell_2$ norms of the rows of
$\mat C$ and promotes group sparsity~\cite{Rao1998a}.

We call the optimal matrix pair $(\mat P_\opt,\mat C_\opt)$ for the
problem~\eqref{eqn:lld} the \emph{low-leverage decomposition} (LLD) of
$\mat X$; we can interpret $\mat C_\opt$ as an identified corruption
and $\mat P_\opt$ as a surrogate for the uncorrupted observations.  We
define our robust components as the right singular vectors of the
surrogate matrix $\mat P_\opt$.  The detailed procedure appears in
Algorithm~\ref{alg:lld}.  We show in Section~\ref{sec:justlld} that
our recovered data matrix $\mat P_\opt$ has the additional property of
being a low-leverage set of observations.

The \lld\ formulation is related to recent
proposals~\cite{Candes2009,Chandrasekaran2009}, and we discuss this
point more in Section~\ref{sec:cvx_approach}.

As we were preparing this manuscript, we became aware of the
independent work~\cite{Xu2010,Xu2010a} which also considers~\eqref{eqn:lld}
for the robust PCA problem.  This work shows that, under certain
hypotheses, the recovered low-rank data $\mat P_\opt$ has the same
row-space as the true data and the corrupted rows are correctly
identified.  

\begin{figure}[t]
  \framebox{
    \begin{minipage}{0.8\columnwidth}
      \refstepcounter{algcounter}%
      \noindent{\large \textbf{Algorithm~\arabic{algcounter}:}} 
      \textbf{Low-Leverage Decomposition} \label{alg:lld} \\
      \textsc{Input:} An $n\times p$ data matrix $\mat X$;
      desired number of principal components $T$. \\
      \textsc{Output:} A $p\times T$ matrix $\mat V_\opt$ with
      orthogonal columns.
      \vspace{5pt}
      \begin{enumerate}
        \setlength{\itemsep}{0pt}
        \setlength{\parskip}{0pt}
        \setlength{\parsep}{0pt}        
      \item \label{Alg:LLD} Find $(\mat P_\opt,\mat C_\opt)$ that solve
        \begin{equation}\label{eq:LLDsdp}
          \minprog{(\mat P, \mat C)}{\nucnorm{\mat
              P}+\gamma\rsnorm{\mat C}}{\mat P+\mat C=\mat X}
        \end{equation}
      \item\label{Alg:eigs} Compute the SVD $\mat P_\opt = \mat U \mat
        \Sigma \mat V^{\ad}$.
      \item Set $\mat V_\opt$ to the first $T$ columns of $\mat V$, that is, set
        \begin{equation*}
          [\mat V_\opt]_{ij} = [\mat V]_{ij} \text{ for } i=1,\dotsc, p,\;\text{and}\; j=1,\dotsc, T.
        \end{equation*}
      \end{enumerate}
    \end{minipage}
  }
\end{figure}

\subsection{Low-Leverage by Duality}\label{sec:justlld}
In this section, we demonstrate that~\eqref{eqn:lld} extracts a
low-leverage model for the data.  This result follows from duality
arguments that characterize the optimum of the convex program.
\begin{lemma}[First-order optimality conditions
  for~\eqref{eqn:lld}]\label{lem:lldsubg}
  A feasible pair $(\mat P,\mat C)$ is optimal for~\eqref{eqn:lld} if
  and only if there exists a matrix $\mat Q$ such that
  \begin{subequations}\label{eqs:dualcert}
  \begin{align}
    \phantom{-}\Inner{\mat Q,\mat P} &= \phantom{\gamma}\nucnorm{\mat P}, 
    \quad
    \norm{\mat Q}_{2\to 2} \le 1 \label{Qspec}\\
    -\Inner{\mat Q,\mat C} &= \gamma \rsnorm{\mat C},
    \quad \rmnorm{\mat Q}\le \gamma, \label{Qrow}
  \end{align}
  \end{subequations}
\end{lemma}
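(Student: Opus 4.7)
The plan is to derive the conditions from standard convex duality applied to the problem
$$\min_{\mat P, \mat C} f(\mat P) + g(\mat C) \text{ subject to } \mat P + \mat C = \mat X,$$
where $f(\mat P) = \nucnorm{\mat P}$ and $g(\mat C) = \gamma \rsnorm{\mat C}$. Both $f$ and $g$ are proper convex functions that are finite on the entire matrix space, so Slater's condition holds trivially and strong duality applies. Introducing a matrix multiplier $\mat Q$ for the equality constraint, the Lagrangian reads
$$L(\mat P, \mat C, \mat Q) = f(\mat P) + g(\mat C) + \Inner{\mat Q, \mat X - \mat P - \mat C}.$$
The KKT/saddle-point characterization then says that a feasible $(\mat P, \mat C)$ is optimal if and only if there exists $\mat Q$ making $L$ stationary in the primal variables, i.e.\
$$\mat Q \in \subg f(\mat P) \quad \text{and} \quad -\mat Q \in \subg g(\mat C).$$

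The second step is to rewrite these subdifferential inclusions via the generic identity
$$\subg \norm{x} = \bigl\{ q : \norm{q}^* \le 1 \text{ and } \Inner{q, x} = \norm{x} \bigr\},$$
valid for any norm with dual norm $\norm{\cdot}^*$. Table~\ref{tab:normsummary} supplies exactly what is needed: the nuclear norm $\nucnorm{\cdot}$ is dual to the spectral norm $\norm{\cdot}_{2 \to 2}$, and the row-sum norm $\rsnorm{\cdot}$ is dual to the max row norm $\rmnorm{\cdot}$. Applying the identity to $f$ immediately yields the first pair of conditions. Applying it to $g$, and using that scaling a norm by the positive constant $\gamma$ scales its subdifferential by $\gamma$, yields the second pair after clearing the factor of $\gamma$.

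There is no serious obstacle here. The only step where one might pause is the identification of the dual norms, and these have already been recorded in Table~\ref{tab:normsummary}. The subdifferential sum rule applies without qualification because both terms of the objective are everywhere finite, and the subgradient-of-a-norm formula together with the KKT characterization for convex programs with affine constraints are textbook facts of convex analysis. The lemma is best viewed as a setup result: its payoff comes in the next step of the paper, where these conditions are used to certify that $\mat P_\opt$ forms a low-leverage set of observations.
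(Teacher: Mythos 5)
Your proposal is correct and follows essentially the same route as the paper: reduce optimality to the two subdifferential inclusions $\mat Q \in \subg\nucnorm{\mat P}$ and $-\mat Q \in \subg\,\gamma\rsnorm{\mat C}$ (the paper does this by eliminating $\mat C$ and using the subdifferential sum rule rather than the Lagrangian, but these are interchangeable here), and then unwind them via the dual-norm characterization of $\subg\norm{\cdot}$. The only difference is that you cite the norm-subdifferential formula as a textbook fact, whereas the paper proves it from the subgradient inequality by testing with $\mat\Delta = \pm\mat P$ and with a $\mat\Delta$ attaining the dual pairing; either is acceptable.
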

\begin{proof} 
  It follows from standard subdifferential conditions that a feasible
  point $(\mat P,\mat C)$ minimizes the functional in~\eqref{eqn:lld}
  if and only if zero is in the subgradient of $f(\mat P) =
  \nucnorm{\mat P}+\gamma\rsnorm{\mat X-\mat P}$.  By the additivity
  of subgradients~\cite[Thm. 23.8]{Rockafellar1970}, this condition
  holds if and only if there exists a matrix $\mat Q$ such that the
  subgradient conditions $\mat Q \in \subg\nucnorm{\mat P}$ and $-\mat
  Q \in \subg \gamma \rsnorm{\mat C}$ are in force.
  
  We show that these subgradient conditions are equivalent
  to~\eqref{eqs:dualcert}.  By definition of the subdifferential, $\mat Q
  \in \subg \norm{\mat P}_{2\to 2}^*$ if and only if for every
  perturbation $\mat \Delta$ the subgradient inequality
  \begin{equation}
    \label{eq:subIneq}
    \Inner{\mat Q, \mat \Delta} \le 
    \norm{\mat P+\mat \Delta}_{2\to 2}^* - \norm{\mat P}_{2\to 2}^*
  \end{equation}
  holds.  Suppose first that~\eqref{Qspec} holds. Then, for all $\mat
  \Delta$, we have
  \begin{equation*}
    \Inner{\mat Q, \mat \Delta} = 
    \Inner{\mat Q, \mat P + \mat \Delta } - \norm{\mat P}_{2\to 2}^* 
    \le \norm{\mat Q}_{2\to 2} \norm{\mat P + \mat \Delta}_{2\to 2}^*  
    - \norm{\mat P}_{2\to 2}^* ,
  \end{equation*}
  where the inequality follows by the definition of dual norms.  Since
  $\norm{\mat Q}\le 1$ by assumption, the subgradient
  inequality~\eqref{eq:subIneq} must hold.

  It remains to show that the subgradient
  inequality~\eqref{eq:subIneq} implies~\eqref{Qspec}.  Taking $\mat
  \Delta = \mat P$ in~\eqref{eq:subIneq} gives $\Inner{\mat Q, \mat P}
  \le \norm{\mat P}_{2\to 2}^*$, while $\mat \Delta = - \mat P$ gives
  the reverse inequality $\Inner{\mat Q, \mat P} \ge \norm{\mat
    P}_{2\to 2}^*$.  Therefore the subgradient
  inequality~\eqref{eq:subIneq} implies $\Inner{\mat Q, \mat P } =
  \norm{\mat P}_{2\to 2}^*$.
  
  On the other hand, suppose that $\mat \Delta \ne \mathbf{0}$
  satisfies $\Inner{\mat Q, \mat \Delta } = \norm{\mat Q}_{2\to 2}
  \norm{\mat \Delta}_{2\to 2}^*$; such a matrix $\mat \Delta$ must
  always exist in finite dimensions since suprema are attained in the
  trace definition of norms.  Then the subgradient
  inequality~\eqref{eq:subIneq} implies
  \begin{equation*}
    \norm{\mat Q}_{2\to 2} \norm{\mat \Delta}_{2\to 2}^* \le
    \norm{\mat P+\mat \Delta }_{2\to 2}^* -\norm{\mat P}_{2\to 2}^*
    \le
    \norm{\mat \Delta}
  \end{equation*}
  where the second inequality follows by the triangle inequality.
  Since $\mat \Delta\ne \mathbf{0}$, we have shown that the
  subgradient inequality implies $\norm{\mat Q}_{2\to 2} \le 1$. Hence
  $\mat Q \in \subg \norm{\mat P}_{2\to 2}^*$ is equivalent
  to~\eqref{Qspec}.  The equivalence between $-\mat Q \in \subg \gamma
  \norm{\mat C}_{2\to \infty}^*$ and relation~\eqref{Qrow} follows
  analogously.
\end{proof}

Before continuing, we introduce another fact concerning the
subgradient of unitarily invariant norms.  Let $\mat P = \mat U \mat
\Sigma \mat V^\ad$ be the compact SVD of $\mat P$.  It follows
from~\cite{Watson1992} that~\eqref{Qspec} implies $\mat Q = \mat U
\mat V^\ad + \mat W$, where, in particular, $\mat U \mat V^\ad \mat W
= \mathbf{0}$.

\subsubsection{Leverage scores}
\label{sec:levscores}

The \emph{leverage score} of the observation $\mat x_i$ corresponding
to the $i$th row of $\mat X$ is given by the number $[\mat H]_{ii}$,
where $\mat H=\mat X(\mat X^*\mat X)^\dagger \mat X^*$ is the
orthoprojector onto the column space of $\mat X$.  We refer to $\mat
H$ as the \emph{hat matrix} in accord with common statistical
practice.  A large leverage score tends to indicate that the
corresponding observation lies outside of the bulk of the data,
although it does not necessarily indicate that the point is
influential in linear regression. We refer
to~\cite[Ch. 6]{Montgomery2006} for further discussion of leverage
scores.

The following theorem shows that the leverage scores of our
decomposition are bounded above by $\gamma^2$, justifying the
terminology low-leverage decomposition for the solution of the
program~\eqref{eqn:lld}.
\begin{theorem}\label{thm:lld}
  Suppose $(\mat P_\opt,\mat C_\opt)$ is an optimal point of the
  program~\eqref{eqn:lld}.  Then the diagonal elements of the hat
  matrix $\mat H=\mat P_\opt(\mat P_\opt^\ad\mat P_\opt)^\dagger \mat
  P_\opt^\ad$ are bounded above by $\gamma^2$.
\end{theorem}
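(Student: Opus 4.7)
The plan is to leverage the first-order optimality conditions from Lemma~\ref{lem:lldsubg} together with the explicit form of the subgradient of the nuclear norm to extract a pointwise bound on the rows of the left singular factor of $\mat P_\opt$.

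First I would simplify the hat matrix. Writing the compact SVD $\mat P_\opt = \mat U \mat \Sigma \mat V^\ad$ with $\mat \Sigma$ invertible, a direct computation gives $(\mat P_\opt^\ad \mat P_\opt)^\dagger = \mat V \mat \Sigma^{-2} \mat V^\ad$, and plugging this into the definition of $\mat H$ telescopes to
\begin{equation*}
  \mat H = \mat P_\opt (\mat P_\opt^\ad \mat P_\opt)^\dagger \mat P_\opt^\ad = \mat U \mat U^\ad.
\end{equation*}
Consequently $[\mat H]_{ii} = \ltwo{\vec u_i}^2$, where $\vec u_i$ denotes the $i$th row of $\mat U$, and it suffices to prove that $\ltwo{\vec u_i} \le \gamma$ for every $i$.

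Next, I invoke Lemma~\ref{lem:lldsubg} to obtain a dual certificate $\mat Q$ with $\norm{\mat Q}_{2\to 2} \le 1$, $\rmnorm{\mat Q} \le \gamma$, and $\Inner{\mat Q,\mat P_\opt} = \nucnorm{\mat P_\opt}$. The remark after the lemma (citing Watson~\cite{Watson1992}) gives the decomposition $\mat Q = \mat U \mat V^\ad + \mat W$ where $\mat W$ lies in the orthogonal complement of the row and column spaces of $\mat P_\opt$; in particular $\mat W \mat V = \mathbf{0}$. Multiplying on the right by $\mat V$ and using $\mat V^\ad \mat V = \mathbf{I}$ yields
\begin{equation*}
  \mat Q \mat V = \mat U \mat V^\ad \mat V + \mat W \mat V = \mat U.
\end{equation*}

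The final step is a row-wise norm bound. For each $i$, the $i$th row of $\mat U$ equals $[\mat Q]_{i,:} \mat V$, and since $\mat V$ has orthonormal columns, $\mat V \mat V^\ad \gleq \mathbf{I}$, so $\ltwo{[\mat Q]_{i,:} \mat V} \le \ltwo{[\mat Q]_{i,:}}$. Therefore
\begin{equation*}
  \ltwo{\vec u_i} = \ltwo{[\mat Q]_{i,:} \mat V} \le \ltwo{[\mat Q]_{i,:}} \le \rmnorm{\mat Q} \le \gamma,
\end{equation*}
and squaring yields $[\mat H]_{ii} \le \gamma^2$, as desired. There is no real obstacle here beyond correctly marrying the explicit form of the nuclear-norm subgradient with the row-norm (i.e.\ $\ell_2 \to \ell_\infty$) bound on the dual certificate; the argument is essentially two lines once the hat matrix is identified as $\mat U \mat U^\ad$.
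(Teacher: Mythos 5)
Your proof is correct and follows essentially the same route as the paper: both invoke Lemma~\ref{lem:lldsubg} for the dual certificate $\mat Q$, use Watson's characterization $\mat Q = \mat U\mat V^\ad + \mat W$, identify $\mat H = \mat U\mat U^\ad$, and conclude from the row-norm bound $\rmnorm{\mat Q}\le\gamma$. The only (cosmetic) difference is that you pass through the contraction $\mat Q\mat V = \mat U$ to bound the rows of $\mat U$ directly, whereas the paper reaches the same conclusion via the semidefinite domination $\mat Q\mat Q^\ad = \mat U\mat U^\ad + \mat W\mat W^\ad \ggeq \mat H$.
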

\begin{proof}
  From the characterization of the subgradient of unitarily invariant
  norms~\cite{Watson1992} discussed above, we know that $\mat Q = \mat
  U\mat V^{\ad} + \mat W$ with $\mat U \mat V^{\ad} \mat W^{\ad} =
  \mathbf{0}$. Thus,
  \[
  \mat Q\mat Q^{\ad} = 
  \mat U\mat U^{\ad} + \mat W\mat W^{\ad} \ggeq \mat U\mat
  U^{\ad} = \mat H,
  \] 
  where the last equality can be easily checked using the definition
  of $\mat H$ and the SVD of $\mat P_\opt$.  Since the diagonal
  entries of a positive-semidefinite matrix are nonnegative, this
  relation implies $[\mat H]_{ii} \le [\mat Q\mat Q^{\ad}]_{ii}$.
  Recall that the $\ell_2 \to \ell_\infty$ operator norm is the
  maximum $\ell_2$ row norm of the matrix. Thus relation~\eqref{Qrow}
  of Lemma~\ref{lem:lldsubg} implies that $[\mat Q\mat Q^{\ad}]_{ii}
  \le \gamma^2$, which completes the proof.
\end{proof}

We can view our proposal as a method of decomposing a data matrix
$\mat X$ into a component with a (user-specified!) upper bound on the
leverage plus an error term. Moreover, this result gives a statistical
interpretation to the regularization parameter $\gamma$
in~\eqref{eqn:lld}.

We note that while our program guarantees a low-leverage
decomposition, an assumption of suitably small leverage is a technical
hypothesis in other works, e.g.,~\cite[eq.~(1.2)]{Candes2009}.

The reader should be warned that this method does not necessarily
produce a low-leverage solution if we use our program to identify
outlying data and then ``prune'' the rows. That is, suppose $(\mat
P_\opt,\mat C_\opt)$ is an optimal point of~\eqref{eqn:lld} and $\vec
c_i = \mathbf{0}$ for row indices $i \in I$. Then the corresponding
matrix $\mat X_I= \mat P_I$ \emph{does not} necessarily have leverage
scores bounded above by $\gamma^2$.

\subsection{The Choice of $\gamma$ }
\label{sec:lld_gamma}

In this section, we study how the value of the regularization
parameter $\gamma$ affects the properties of the decomposition.

We begin by showing that, when $\gamma \ge 1$, the degenerate solution
$(\mat P_\opt,\mat C_\opt) = (\mat X, \mathbf{0})$
minimizes~\eqref{eqn:lld}.  This claim follows by explicit
construction.  Let $\mat U \mat \Sigma \mat V^{\ad}$ be the compact
SVD of $\mat X$, and define $\mat Q = \mat U \mat V^{\ad}$.  Clearly
$\Inner{\mat Q, \mat X}= \nucnorm{\mat X}$, so $\mat Q$
satisfies~\eqref{Qspec} with $\mat P_\opt = \mat X$.  By construction,
the maximum singular value of $\mat Q$ is bounded above by one.
Equivalently, $\mat Q \mat Q^\ad \gleq \mathbf{I}$. This inequality
implies $[\mat Q \mat Q^\ad]_{ii}\le 1$.  Since the diagonal entries
of $\mat Q \mat Q^\ad$ are the squared row norms of $\mat Q$, we have
shown that $\norm{\mat Q}_{2\to \infty} \le 1 \le \gamma$.  This bound
demonstrates that $\mat Q$ satisfies~\eqref{Qrow} with $\mat C_\opt =
\mathbf{0}$, which certifies optimality of this degenerate solution by
Lemma~\ref{lem:lldsubg}.

We now show that the regularization parameter $\gamma$ gives an upper
bound on the rank of the optimal $\mat P_\opt$.  It is easy to show
using the SVD of $\mat P_\opt$ that the trace of the hat matrix $\mat
H$ defined above is equal the rank of $\mat P_\opt$.  Since $[\mat
H]_{ii} \le \gamma^2$ by Theorem~\ref{thm:lld}, we must have
\begin{equation}\label{eq:gammarank}
  \rank(\mat P_\opt) = \trace(\mat H) \le n \gamma^2.
\end{equation}
The rank is a positive integer, so $\gamma < 1/\sqrt{n}$ implies that
the optimal $\mat P_\opt$ is trivial.  Moreover, in order to get $T$
meaningful components in Step~\ref{Alg:eigs} of
Algorithm~\ref{alg:lld}, we require $\rank(\mat P_\opt) \ge T$.  Thus,
we can limit ourselves to situations where $\gamma \in [\sqrt{T/n},
1]$.

Inequality~\eqref{eq:gammarank} has implications for the numerical
solution of~\eqref{eqn:lld}.  As we discuss in
Section~\ref{sec:lldalg}, the bulk of the computation comes from
computing an SVD at each iteration.  When the solution of the
optimization problem has low rank, the iterates also tend to have low
rank. This allows us to save significant computational effort by
computing partial singular decompositions at each step.  A judicious
choice of $\gamma$ can increase the performance of our algorithm
immensely.  We find that taking $n\gamma^2 \approx T^2$ is a useful
heuristic for achieving a rank-$T$ optimal solution, so long as $n \gg
T^2$.

On the other hand, typical statistical data does not show true
low-rank behavior even when there are no outliers.  Therefore, forcing
the optimal decomposition to be low rank typically results in a dense
corruption $\mat C_\opt$.  This effect may be mitigated somewhat by
another formulation we discuss briefly in Section~\ref{sec:lld_ext}.
In practice we find that setting $\gamma$ somewhat less than
$\sqrt{p/n}$, say $\gamma = 0.8\sqrt{p/n}$, provides a very good
low-rank model, but it does poorly in the context of outlier
identification.  We discuss specific parameter choices for our
experiments in Section~\ref{sec:experiments}.

\subsection{Computing the Low-Leverage Decomposition}
\label{sec:lldalg}
Although general-purpose semidefinite programming software such as
\texttt{CVX}~\cite{Grant2010,Grant2008} can solve small instances
of~\eqref{eqn:lld} efficiently, the interior-point methods they
utilize may be unable to complete even a single iteration of a
large-scale problem.  This observation indicates that we need to use
different methods for large-scale problems.

To solve~\eqref{eqn:lld}, we recommend an alternating direction
augmented Lagrangian algorithm analogous to the one used
in~\cite{Candes2009}; see also~\cite{Lin2009}. The generic form of the
method is known as the Augmented Lagrangian Method of Multipliers
(ALMM).  The augmented Lagrangian for~\eqref{eqn:lld} with dual
variable $\mat Q$ is given by
\begin{equation}
  \mathcal{L}_\mu(\mat P,\mat C,\mat Q) = \nucnorm{\mat P} + \gamma
  \rsnorm{\mat C} + \Inner{\mat X - \mat P - \mat C, \mat Q} +
  \frac{\mu}{2} \fronorm{\mat X - \mat P - \mat C}^2.
\end{equation}
For an initial starting point $\mat P^0$, we alternately solve $\mat
P^{k+1} = \argmin_{\mat P} \mathcal{L}_\mu(\mat P, \mat C^k, \mat
Q^k)$ and $\mat C^{k+1} =\argmin_{\mat C} \mathcal{L}_\mu(\mat
P^{k+1}, \mat C, \mat Q^k)$.  We then update the multiplier by the
feasibility gap $\mat Q^{k+1} = \mat Q^k + \mu (\mat X - \mat P^{k+1}
- \mat C^{k+1})$.

The minimizations above have an explicit form in terms of shrinkage
operations~\cite{Combettes2006}%
\begin{subequations}\label{eqs:shrinks}
  \begin{align} 
    \mat C^{k+1} &= \shrinkrows\left(\mat X - \mat P^k + \frac{1}{\mu}
      \mat Q^k,\mu \gamma \right) \\ 
    \mat P^{k+1} &= \shrinkspec\left(\mat
      X - \mat C^{k+1} + \frac{1}{\mu} \mat Q^k,\mu \right),
  \end{align}
\end{subequations}
where $\shrinkrows(\mat A,\nu)$ soft-thresholds each row $\vec a_i$ of $\mat
A$:
\begin{equation*}
  \shrinkrows(:,\nu): \mat A \longmapsto 
  \diag([1-\nu/\ltwo{\vec a_i}]_+)\cdot \mat A,
\end{equation*}
where $[x]_+ = \max\{x,0\}$.  Similarly $\shrinkspec(\mat A, \nu)$
soft-thresholds the singular values of $\mat A$
\begin{equation}
  \shrinkspec(:, \nu):  \mat U \mat \Sigma \mat V^\ad \longmapsto \mat U 
  \left[\mat \Sigma -\nu \mathbf{I}\right]_+
  \mat V^{\ad},
\end{equation}
where the operator $[\cdot]_+$ is applied element-wise.  We initialize
the algorithm with $\mat P^0 = \mathbf{0}$ and set the parameter $\mu
= np/\rsnorm{\mat X}$.  We stop the algorithm when the iterates are
nearly feasible, that is, $\bigl\|{\mat X - \mat P^k -\mat
  C^k}\bigr\|<10^{-7}\fronorm{\mat X}$.

The main computational difficulty when running this algorithm involves
computing the spectral shrinkage operator.  When the iterates $\mat
P^k$ are low rank, we can save significant computational effort by
performing only partial singular value decompositions~\cite{Lin2009}.
We can leverage our analysis in Section~\ref{sec:lld_gamma} to ensure
that the optimal $\mat P_\opt$ is low rank.  Since the algorithmic
iterates tend to be low-rank in this case, we can significantly
improve the performance of our algorithm by choosing $\gamma$ to limit
the rank of the optimal solution.  In practice, we have found that one
should set the quantity $n\gamma^2$ somewhat larger than the desired
rank of the solution, e.g., $n\gamma^2 \approx T^2$ when we desire a
rank-$T$ solution.

\subsection{Extensions for a Noisy Model}
\label{sec:lld_ext}

We note that there is an obvious extension of the \lld\ when one wants
to account for an additional of noise in the model.  Suppose
that in addition to gross corruptions of certain observations, we
would also like to model small corruptions or noise that may be spread
throughout the data.

Instead of enforcing the equality $\mat X = \mat P + \mat C$, we allow
for some additional slack of the form $\fronorm{\mat X - \mat P - \mat
  C} \le \eta$, where $\eta$ is an estimate for the noise level.  That
is, we solve the problem
\begin{equation}
  \label{eq:lld_extension}
  \minprog{}{\norm{\mat P}_{2\to 2}^* + \gamma \norm{\mat C}_{2 \to \infty}^*}
  { \fronorm{\mat X - \mat P - \mat C} \le \eta}
\end{equation}
When $\eta=0$, this is equivalent to our proposal~\eqref{eqn:lld} for
the gross corruption model.  Other loss functions are also possible.
Note that the Frobenius norm remains invariant under a rotation on the
right, which is a feature of~\eqref{eqn:lld} that we would like to
preserve.  

This formulation is also studied in the independent
work~\cite{Xu2010,Xu2010a}.  It is shown there that under some
technical conditions, the decomposition from~\eqref{eq:lld_extension}
results in a decomposition where $\mat P_\opt$ is close to a matrix
with the same row-space as the true observations, and the matrix $\mat
C_\opt$ is close to a matrix that correctly identifies the column
support of the corruption.

\section{Previous Work}
\label{sec:prevwork}

This section describes previous work on robust formulations of
principal component analysis.  Convex approaches to robust PCA are
unusual, and, as a consequence, many other attempts at robust PCA lack
rigorous algorithms.  Often, proposals are put forward with a
mathematical formulation and only a heuristic algorithm---or an
algorithm without a clear mathematical formulation.

In Sections~\ref{sec:pppca_background} and~\ref{sec:cvx_approach}, we
describe the two methods in the literature most closely related to our
proposals.  We then describe in detail an approach for robust PCA
recommended by Maronna~\cite{Maronna2005} with which we provide
comparisons in Section~\ref{sec:experiments}.  We conclude with a
short overview of other robust PCA proposals that have appeared in the
literature.

\subsection{ Antecedents for MDR: Projection Pursuit PCA}
\label{sec:pppca_background}

Our \mdr\ proposal is a particular instance of an approach that has
come to be known as \emph{projection-pursuit PCA} (PP-PCA), as we
discuss in Section~\ref{sec:pppca}.  The theoretical properties of
PP-PCA are well understood; see for instance~\cite{Cui2003}
and~\cite{Croux2005}.  

All of the algorithms we have found in the literature for computing
PP-PCA are meant to operate with an arbitrary scale.  In view of the
fact that the PP-PCA problem is \NP-hard, it is unsurprising that the
literature appears to contain no PP-PCA algorithms with proofs of
correctness and tractability.  Indeed, we have been unable to find
other work that recognizes that the PP-PCA problem is intractable in a
rigorous sense.

The original study of Li and Chen~\cite{Li1985} uses a Monte Carlo
approach that was found to be computationally expensive.  In theory,
even simple Monte Carlo methods (e.g., randomly sampling the unit
sphere) can produce arbitrarily good solutions to
problem~\eqref{eq:varianceinv} with an arbitrary (continuous) scale.
Given the computational hardness of the problem, it is unlikely that
Monte Carlo approaches can provide guarantees of computational
efficiency.

Current algorithms for PP-PCA rely on heuristics.  A popular and fast
algorithm for generic projection-pursuit PCA is the finite direction
method (FDM) of Croux and Ruiz-Gazen~\cite{Croux2005}.  This technique
replaces the search over the entire unit sphere $\ltwo{\vec v}=1$ with
a finite search over the directions that appear among the
observations: $\vec v \in \{\vec x_1/\ltwo{\vec x_i}, \dotsc, \vec x_n
/\ltwo{\vec x_n}\}$.  The hope is that directions of large scale are
likely to be well approximated by directions appearing in the data.
This heuristic to performs poorly when $n$ and $p$ are large because
it takes an extremely large number of points to cover a
high-dimensional sphere.




\subsection{A convex approach}\label{sec:cvx_approach}

Recently, a method of Chandrasekaran et~al.~\cite{Chandrasekaran2009}
has been adapted for robust PCA in~\cite{Candes2009}.  This approach
attempts to decompose the data matrix into a sum of a low-rank matrix
and a sparse matrix via the semidefinite program
\begin{equation}
  \label{eq:nplone}
  \minprog{}{\norm{\mat L}_{2\to 2}^* + \lambda \norm{\mat S}_{1\to
      \infty}^*}{\mat L+\mat S = \mat X.}
\end{equation}
The nuclear norm $\norm{\cdot}_{2\to 2}^*$ promotes low rank
and the matrix $\ell_1$ norm $\norm{\cdot}_{1\to \infty}^*$ promotes
sparsity.  We refer to this method as \nplone.  The
works~\cite{Candes2009, Chandrasekaran2009} provide conditions under
which \nplone\ succeeds in \emph{exactly} recovering a low-rank and
sparse component.

This convex approach is principled in the sense that the mathematical
formulation is also algorithmically tractable.  On the other hand, it
lacks an invariance to a change in the observation basis possessed by
all other methods we discuss, including standard PCA.  That is,
applying a rotation $\mat U^\ad \mat U = \mathbf{I}$ to the data
$\widehat{\mat X} = \mat X \mat U$ does not result in a similar
rotation of the decomposition due to the fact that the norm
$\norm{\cdot}_{1\to \infty}^*$ is not invariant under this
transformation.

One may argue that this invariance is inconsequential: in real data,
the particular choice of coordinates has a meaning and outliers may
occur coordinate-wise.  This argument is defensible in domain specific
examples, such as image data that contain
specularities~\cite{Candes2009}.  Nevertheless, PCA is intended to
locate a coordinate basis that explains data more effectively than the
standard basis~\cite{Hotelling1933}.  If this is the analytical goal,
basis invariance is indeed a requisite property.  See
Section~\ref{sec:no2} for an experiment where this lack of orthogonal
invariance in \nplone\ appears to produce unnerving results.

\subsection{Spherical PCA}
\label{sec:sph}

Another approach, known as spherical principal components
(\sph)~\cite{Locantore1999}, rescales the observations to unit
(Euclidean) norm and applies standard PCA to this modified data.  To
implement the \sph\ method, we first compute a normalized matrix
$\widehat{\mat{X}}$.  Each row of $\widehat{ \mat{X}}$ is the
normalized version of the corresponding row of the centered data
matrix $\mat X$, that is $\widehat{ \vec x}_i = \vec x_i /\ltwo{\vec
  x_i}$.  Using the row-normalization operator from~\eqref{eq:bm_opt},
we can express the normalized matrix as $\widehat{\mat X} =
\mathcal{N}(\mat X)$.

The robust components are then defined as the standard principal
components of the rescaled matrix $\widehat{\mat X}$.  Since all of
the observations from the normalized matrix $\widehat{\mat X}$ have
norm one, there are no large magnitude observations that exert an
undue influence on the principal components.

A study by Maronna~\cite{Maronna2005} shows that \sph\ enjoys good
practical performance.  The ease of implementation and relatively good
behavior of \sph\ leads Maronna to suggest it as the default choice
for robust principal component analysis.  As a result, we use \sph\ as
a baseline comparison for the performance of our robust methods in
Section~\ref{sec:experiments}.

\subsection{Other proposals}
\label{sec:otherprop}

Some of the earliest methods for robust PCA compute approximations of
correlation or covariance matrices using robust methods. Gnanadesikan
and Kettenring propose direct robust estimation of the covariance
matrices through robust estimation of the individual
entries~\cite{Gnanadesikan1972}.  This may lead to counterintuitive
results such as non-positive covariance matrices.  An alternative
approach explicitly enforces positive matrices as minimizers of a
functional such as an $M$-estimator~\cite{Devlin1981}; see also the
more recent study~\cite{Croux2000}.

A representative example of robust PCA from the machine learning
community is the work of De La Torre and Black~\cite{DeLaTorre2003}.
They define the robust components as the minimum of a highly
non-convex energy function and attempt to minimize this energy
function using an iteratively reweighted least-squares algorithm
coupled with an annealing step.  No theoretical guarantees of
correctness for the algorithm are provided.

Another recent approach appears in the paper~\cite{Xu2009} of Xu
et~al.  This algorithm randomly removes observations that appear to
have high influence in the current estimate of the principal
components.  The principal component estimate is computed from the
trimmed data.  Xu et~al. are able to establish strong theoretical
properties of their algorithm, including a high breakdown point in the
high-dimensional scaling regime where $n\to \infty$ and $n/p \to c >
0$.

\section{Numerical Experiments}
\label{sec:experiments}

This section provides some numerical examples comparing our proposals
with standard PCA and other robust PCA methods in the literature.  In
Section~\ref{sec:toppc}, we look at the projection of two data sets on
the top robust component.  Section~\ref{sec:bus} repeats a
multiple-component experiment of Maronna~\cite{Maronna2006} with
additional robust methods.  Section~\ref{sec:movielens} contains a
larger experiment, where we calculate the first two components of a
dense matrix with more than twenty million entries.

All of these experiments and algorithms are implemented with
\textsc{Matlab}.  Following the principle of reproducible
research~\cite{Buckheit1995a}, we provide code that reproduces the
exact experiments in this work~\cite{McCoy2010b}.

\subsection{Projection onto the top  component}
\label{sec:toppc}

In this section, we study the robust component methods applied to two
data sets.  The first set is a selection of environmental factors that
may affect the concentration of nitrogen dioxide around Oslo, Norway.
The second example is constructed from standard iris data. In each
case, we examine the spread of the data in the direction of the top
robust component.

\subsubsection{Experimental setup}
For these experiments, we center the data by removing the Euclidean
median from each observation.  The Euclidean median $\widehat{\vec
  \mu}$ is a robust estimate of the center of the data, and is defined
as
\begin{equation}
  \label{eq:eucmed}
  \widehat{\vec \mu} 
  = \argmin_{\vec \mu} \sum\nolimits_{i=1}^n \ltwo{\vec x_i -\vec \mu}.
\end{equation}
Maronna~\cite[Ch. 9 ]{Maronna2006} gives a method to solve this convex
problem for $\widehat{\vec \mu}$.

We project the data onto the top component for each method and compare
the performance of the methods by the \emph{interquartile range}
(IQR), that is, the distance between the $25$th and $75$th percentile
of the projected data.

We apply extract the dominant component from each data set using our
methods (\mdr and \lld), other robust methods (\sph and \nplone), and
standard PCA.  For \mdr, we use $K=94$ rounding trials as discussed in
Section~\ref{sec:rounding}.  We set the \lld\ weight parameter $\gamma
= 0.8\sqrt{p/n}$.  As recommended in~\cite{Candes2009}, we set the
\nplone\ parameter $\lambda = 1/\sqrt{n}$ for the first experiment.
With the iris data in Section~\ref{sec:iris}, we find that $\lambda =
1/\sqrt{n}$ gives a trivial result: no outliers were identified by
\nplone.  Instead, we use the more favorable choice $\lambda =
0.3/\sqrt{n}$.

\subsubsection{Norwegian nitrogen dioxide data}\label{sec:no2}

Our data for this experiment consists of 500 observations of eight
environmental factors around Oslo, Norway, available on the Statlib
archive~\cite{Aldrin2004}.  The variables include the
log-concentration of nitrogen dioxide (NO$_2$) particles, the number
of cars per hour, and the wind speed, as well as several additional
factors useful for predicting the concentration of NO$_2$ particles.

We calculate the top component of the data using each method.  In
Figure~\ref{fig:no2} we plot the projection of the data onto the
direction of these components using a standard box-and-whisker plot.
The whiskers extend either $1.5$ times the IQR beyond the edge of the
box or to the extreme data point.  We consider points that lie beyond
the whiskers outliers.  We give the percentage of outliers and several
order statistics of the data in Table~\ref{tab:NO2}.

Every robust method results in a larger IQR than \pca.  The \mdr\
component finds the largest IQR, and the \lld\ method finds the
smallest IQR among the robust methods.  Except for \nplone, every
method identifies a direction with a relatively large number of
outliers, which indicates that the data has heavy tails.  

The \nplone\ method is unique because it does not identify a direction
of large spread \emph{outside} of the middle $50\%$ of the data.  We
have observed that a random change of the observation basis causes the
\nplone\ component to perform similarly to the \lld\ component.  By
orthogonal equivariance, the results for methods other than \nplone\
are unchanged by a change in the observation basis.  This indicates
that the behavior of the results given by the \nplone\ component is
due to the lack of orthogonal equivariance.

We note that the approximation ratio for the top \mdr\ component is
near optimal at $0.978$. 


\begin{figure}[t]
  \resizebox{\columnwidth}{!}{ \includegraphics{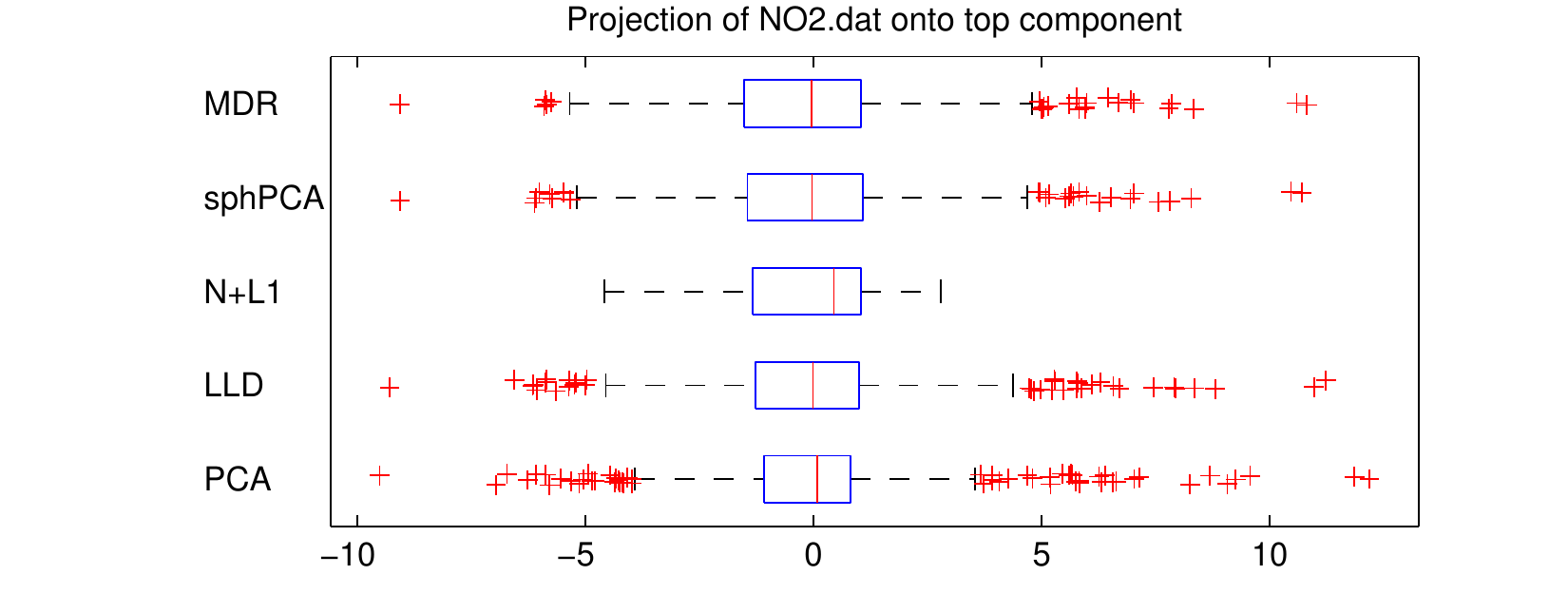}}
  \caption{\label{fig:no2} \textsl{Projection of the Oslo NO$_2$ data
      set onto top components.}  The box surrounds the middle 50\% of
    the data.  The vertical line in the box is the median of the
    data. Each whisker extends either $1.5$ times the length of the
    IQR or to the extreme value of the data, and the red crosses
    beyond the whiskers are the outlying points. The plots are ordered
    by decreasing IQR.  }
\end{figure}

\begin{table}[tbh]
  \caption{\label{tab:NO2} \textsl{Statistics for the projected NO$_{\mathit{2}}$ data.}  The last column 
    lists the percentage 
    of points lying outside the whiskers in Figure~\ref{fig:no2}.}
  \begin{tabular}{|c|c|c|c|c|r@{.}l|r@{.}l|}
    \hline
    Method&	IQR&	min&	25th&	75th&	\multicolumn{2}{|c|}{max} & \multicolumn{2}{|c|}{out}\\
    \hline
    MDR&  	$2.57$&	$-9.07$&	$-1.53$&	$1.05$&	$10$&$82$&	$5$&$00$\% \\
    sphPCA&	$2.53$&	$-9.06$&	$-1.45$&	$1.08$&	$10$&$71$&	$5$&$60$\% \\
    N+L1&	$2.38$&	$-4.58$&	$-1.34$&	$1.05$&	$2$&$79$&	$0$&$00$\%\\
    LLD&   	$2.27$&	$-9.29$&	$-1.27$&	$1.00$&	$11$&$24$&	$7$&$40$\%\\
    PCA&   	$1.89$&	$-9.51$&	$-1.08$&	$0.81$&	$12$&$18$&	$11$&$00$\%\\
    \hline
  \end{tabular}
\end{table}

\subsubsection{Iris data}\label{sec:iris}
We use Fisher's iris data~\cite{Fischer1936} in this experiment.  The
data contains $60$ observations from three different species of iris:
\emph{Iris setosa}, \emph{Iris virginica}, and \emph{Iris versicolor}.
Each observation consists of four measurements, namely sepal length,
sepal width, petal length, and petal width.

Fifty of the observations come from the \emph{setosa} flowers.  We
corrupt these observations with $5$ measurements of \emph{Iris
  virginica} and five measurements of \emph{Iris versicolor}.  We hope
that robust principal components identify a direction of large spread
in the \emph{setosa} bulk of the data.  As a baseline comparison, we
also calculate the dominant principal component of the \emph{setosa}
population without the outlying flowers.

As in Section~\ref{sec:no2}, we project the data onto the direction of
the dominant components.  These points are plotted in
Figure~\ref{fig:iris}; we distinguish the bulk \emph{setosa} points
from the \emph{versicolor} and \emph{virginica} observations.  We
compute an approximate density of the \emph{setosa} observations by
convolving the projected data with a unit volume Gaussian kernel of
width $\sigma = 0.2$.  Table~\ref{table:iris} gives some order
statistics of the projections.

\begin{table}[tbh]
  \caption{\label{table:iris} \textsl{Order statistics for the projection of the 
      \emph{setosa} data onto the top components.}  The last column lists 
    the number of \emph{setosa} points further than 1.5 times IQR 
    left of the 25th percentile or the right of the 75th percentile.
  } 
  \begin{tabular}{|l|r@{.}l|r@{.}l|r@{.}l|r@{.}l|r@{.}l|c|}
    \hline
    Method&	\multicolumn{2}{|c|}{IQR}&	\multicolumn{2}{|c|}{min}&	
    \multicolumn{2}{|c|}{25th}&	\multicolumn{2}{|c|}{75th}&	
    \multicolumn{2}{|c|}{max}& {out}\\
    \hline
    LLD&   	   $0$&$70$&	$-1$&$21$&	$-0$&$41$&	$0$&$29$&	$1$&$14$& $0$.$00$\%\\
    \emph{Setosa} PCA&		$0$&$70$&	$-1$&$22$&	$-0$&$41$&	$0$&$29$&	$1$&$14$& $0$.$00$\%\\
    sphPCA&	   $0$&$69$&	$-1$&$19$&	$-0$&$41$&	$0$&$28$&	$1$&$13$& $0$.$00$\%\\
    N+L1&	   $0$&$66$&	$-1$&$16$&	$-0$&$40$&	$0$&$26$&	$1$&$07$& $0$.$00$\%\\
    MDR&  	   $0$&$37$&	$-0$&$79$&	$-0$&$24$&	$0$&$13$&	$0$&$53$& $0$.$00$\%\\
    PCA&   	   $0$&$19$&	$-0$&$60$&	$-0$&$15$&	$0$&$04$&	$0$&$37$& $6$.$00$\%\\
    \hline
  \end{tabular}
\end{table}

\begin{figure}[tbhp]
  \resizebox{\columnwidth}{!}{ \includegraphics{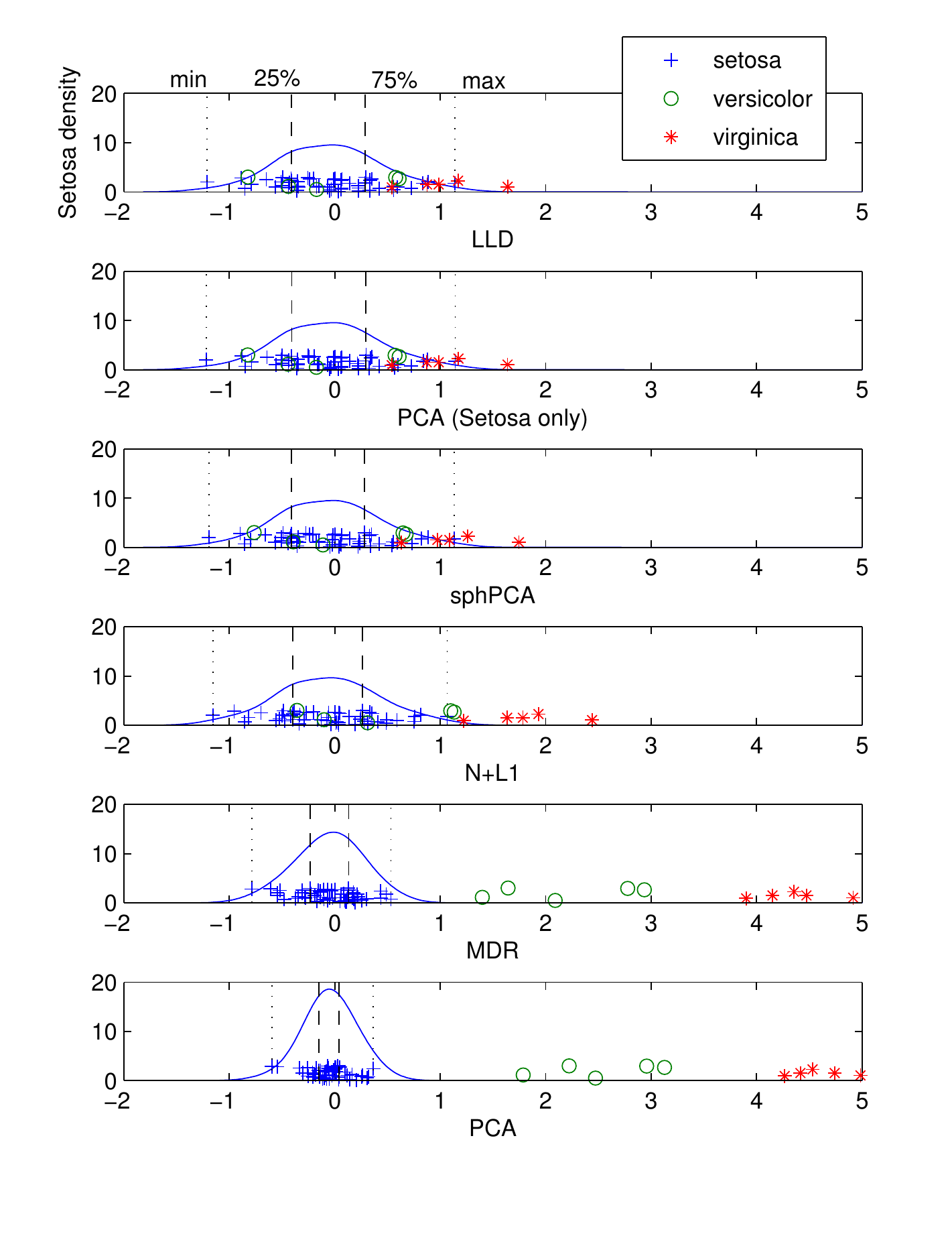}}
  \vspace{-40pt}
  \caption{\label{fig:iris} \textsl{The projections of the iris data onto the
    top components.}  The points are randomly jittered
    above the zero line for readability. The blue curve represents the
    approximate local point density of \emph{setosa}.  Note that \lld\
    and \sph\ essentially provide the same projection as PCA
    \emph{without outliers}.  We sort the plots by decreasing IQR.}
\end{figure}

The dominant component of \lld, \sph, and \nplone\ each achieves an
IQR at least $3$ times that of \pca.  These components do not clearly
distinguish among the three populations, indicating that these methods
are insensitive to the effect of the outliers.  \lld\ and \sph\ appear
the most effective in this situation; indeed, it appears that \lld\
and \sph\ perform as well as \emph{setosa}-only PCA.

Although \mdr\ results in the most modest IQR in the \emph{setosa}
among the robust methods, the IQR associated with the \mdr\ component
is $1.95$ times the IQR of the \emph{setosa} family along the
dominant PCA component.  Unlike the other robust methods, the \mdr\
component discriminates among the three distinct populations.  While
it is clear that \mdr\ \emph{does not} reject the influence of the
outliers, \mdr\ balances the influence of outliers and the bulk of the
data better than \pca.  In this experiment the optimality ratio for
\mdr\ is $0.9975$, certifying that the \mdr\ component is essentially
the direction of maximum mean deviation in the data.

\subsection{Regression Surface for Bus Data}
\label{sec:bus}
In this experiment, we construct a regression surface using multiple
components.  A point is well described by a surface if its Euclidean
distance from the surface is small.  The dominant $T$ classical
principal components span a $T$-dimensional regression surface such
that the sum of the squared distances of the observations to the plane
is minimized.  We would hope that robust components describe the bulk
of the points better than standard components when outliers
contaminate the data.  We illustrate this behavior with an experiment
of Maronna et~al.~\cite[p. 214]{Maronna2006}, which we augment with
additional robust methods.

\subsubsection{Experimental setup}
\label{sec:bus_setup}

Our data consists of $p=18 $ geometric features collected from $n=218$
bus silhouettes~\cite{Siebert1987} that we arrange into an $n\times p$
matrix $\mat X$.  Following Maronna et~al., we remove the $9$th
variable from the data and divide the columns of $\mat X$ by their
median absolute deviation (MADN), a robust measure of scale defined as
\[
\MADN(\vec x) = \median(\abs{\vec x - \median(\vec x)}).
\]
We then center the observations by their Euclidean median.  We
compute the top three components using PCA, \mdr, \lld, \sph, and
\nplone.  We take the \lld\ parameter $\gamma = 0.8\sqrt{n/m}$, the
\nplone\ parameter $\lambda = \sqrt{1/m}$, and the rounding count of
\mdr\ $K=94$.

For each method, we determine the Euclidean distance from each
observation to the orthogonal regression plane spanned by the dominant
three components.  In Figure~\ref{Fig:bus}, we plot the ordered
distances to the robust hyperplanes against the ordered distances to
the PCA hyperplane.

Since the PCA regression surface minimizes the sum of squared
distances to the observations, not all of the observations can lie
below the 1:1 line.  However, a large number of points below the 1:1
line indicates that a robust regression surface explains the bulk of
the data better than the classical surface.

\subsubsection{Discussion}
Figure~\ref{Fig:bus} focuses on the third and fourth quantiles of the
data; the first and second quantiles roughly follow the pattern
apparent in the third quantile.  For clarity, we omit the three most
outlying points that would appear in the upper right corner of the
figure.  Each robust method results in a regression surfaces that
explains the data better than PCA for more than $75\%$ of the points.
In the third quantile, both \nplone\ and \sph\ lose their explanatory
advantage over PCA.  It is not until the after $95\%$ of the data that
\mdr\ and \lld\ provide worse explanations than PCA.  \lld\ is the
dominating method through the latter part of the data.

\begin{figure}[t]
  \resizebox{\columnwidth}{!}{ \includegraphics{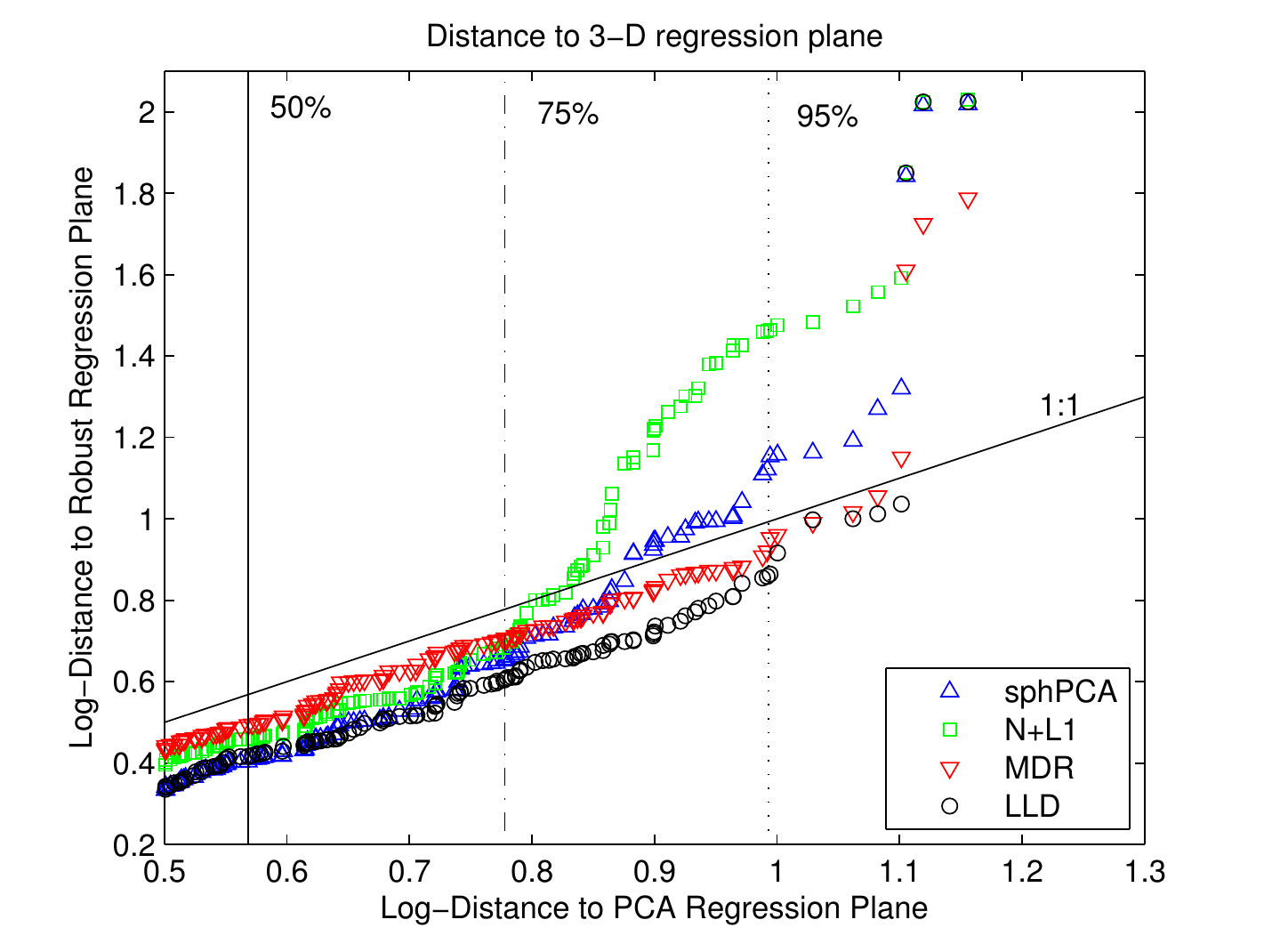}}
  \caption{\textsl{The distance of points to robust regression
      surfaces as a function of the distance of points to the standard
      PCA regression surface.}  The regression surface is determined
    by the top three components from each method.  Points to the left
    of the median follow the same generic pattern as points in the 3rd
    quartile, and are therefore omitted. Three extreme points to the
    right are also omitted.}
  \label{Fig:bus}
\end{figure}

\mdr\ explains the bulk of the data less effectively than the other
robust methods, yet the final outlying observations are explained
better by \mdr\ than the other methods.  This indicates that \mdr\ is
more sensitive to outlying points than the other robust methods, but
is less sensitive to outliers than standard PCA.  The optimality
ratios for the first three \mdr\ components are, respectively,
$0.99999$, $0.99992$, and $0.97253$, implying that \mdr\ essentially
succeeds in PP-PCA with the MD scale for this data.

Finally, we note that changing the \nplone\ parameter to $\lambda =
2\sqrt{1/n}$ results in performance similar to \lld.

\subsection{Movielens}
\label{sec:movielens}
We finish this section with a larger example: the million-rating
movielens data~\cite{movielens}.  The data consist of 6040 users
rating and 3952 movies, though several movies are replicated.  The set
contains just over one million ratings.  Each rating is between one
and five stars, and each user in the data set rated at least 20
movies.

We arrange these responses into an $n=6040$ by $p=3952$ matrix $\mat
X$ whose rows correspond to the users and whose columns correspond to
the movies.  We set unrated movies to the user's median rating, and
center each user's ratings by their personal median.  As with our
other experiments, we center the rows by the Euclidean median, which
results in a dense matrix with nearly $24$ million entries.

We then compute the top two components using PCA, \mdr, \lld, and
\sph.  In order to speed up processing for \lld, we set $\gamma =
\sqrt{100/p}$.  As discussed in Section~\ref{sec:lldalg}, this choice
of $\gamma$ limits the rank of the iterates $\mat P^{(k)}$ in the ALMM
algorithm, which allows us to compute a partial SVD at each step.  Our
choice $\gamma= \sqrt{100/n}$ results in iterates whose rank is
roughly $10$; the rank of the optimal point $\mat P_\opt$ is nine.

Each component $\vec v$ represents a direction in movie coordinates.
The magnitude entry $[\vec v]_i$ indicates how much $\vec v$ points in
the direction of movie $i$.  We use these magnitude of the entries in
the components to rank the movies.  We call movies with large
magnitudes ``important,'' and we call the corresponding entry of the
component a movie's ``importance.''

\subsubsection{Discussion}
\label{sec:movielensdiscussion}

Table~\ref{tab:movielens1st} displays the five most important movies
identified by the first standard principal component, along with the
importance and rank calculated assigned to these movies by the robust
components.  Each method agrees that the violent mobster movie
\emph{GoodFellas} is the most important film. Indeed,
\emph{GoodFellas}, \emph{Army of Darkness}, \emph{A Little Princess},
and \emph{Stand by Me} are ranked in the top five movies by every
method.  However, PCA ranks \emph{Pushing Hands} much higher than the
robust methods.
\begin{table}[tp]
  \centering
  \caption{
    \textsl{Most important movies as given by the  first standard principal 
      component.}  The decimal numbers represent the weight 
    each  component puts on a movie. The integer to the right of 
    the weight is the rank of the movie under the given component.
  }
  \label{tab:movielens1st}
  \begin{tabular}{|l|l|r|l|r|l|r|l|r|}
    \hline 
    Movie & \multicolumn{2}{|c|}{PCA} & \multicolumn{2}{|c|}{MDR} & \multicolumn{2}{|c|}{SPH} & \multicolumn{2}{|c|}{LLD} \\
    \hline
    GoodFellas & $0.0708$& 1 & $0.1016$& 1 & $0.1092$& 1 & $0.0885$& 1 \\
    Army of Darkness & $0.0697$& 2 &$ 0.0914$& 3 & $0.0970$& 4 & $0.0832$& 2 \\
    A Little Princess & $0.0664$& 3 & $0.0899$& 4 & $0.1028$& 2 & $0.0826$& 3 \\
    Pushing Hands & $0.0657$& 4 & $0.0772$& 11 & $0.0827$& 10 & $0.0745$& 8 \\
    Stand by Me & $0.0656$& 5 & $0.0853$& 5 & $0.0896$& 5 & $0.0765$& 5\\
    \hline
  \end{tabular}

  \vspace{12pt}
  \caption{\textsl{Most important movies: second  component.}}
  \label{tab:movielens2nd}
  \begin{tabular}{|l|r|r|r|r|r|r|r|r|}
    \hline 
    Movie & \multicolumn{2}{|c|}{PCA} & \multicolumn{2}{|c|}{MDR} & \multicolumn{2}{|c|}{SPH} & \multicolumn{2}{|c|}{LLD} \\
    \hline
    Nikita & $0.0982$&1 & $0.1099$&2 & $0.0959$&13 & $0.1071$&5 \\
    Citizen Kane & $0.0945$&2 & $0.1051$&4 & $0.0935$&15 & $0.1104$&3 \\
    Fried Green Tomatoes & $0.0917$&3 & $0.0934$&8 & $0.0727$&35 & $0.0944$&11 \\
    Unforgiven & $0.0891$&4 & $0.0923$&10 & $0.0877$&21 & $0.0982$&9 \\
    Mommie Dearest & $-0.0855$&5 & $-0.1108$&1 & $-0.1522$&1 & $-0.1281$&1 \\
    \hline
    \end{tabular}
\end{table}

In Table~\ref{tab:movielens1st}, each importance has positive sign.
For each method, the first component assigns very few movies a
negative importance for the first component.  This fact comes about
because the typical user rating is positive; that is, the sum $\sum_j
[\vec x_i]_j$ is greater than zero for most users.

Table~\ref{tab:movielens2nd} displays the results for the second
components.  Each robust component views \emph{Mommie Dearest} as the
most important movie, while standard PCA relegates it to fifth place.
Neither \emph{Fried Green Tomatoes} nor \emph{Unforgiven} are among
the top five movies for the robust methods.  With the second
component, \sph\ takes the most dramatic shift away from PCA, with
only \emph{Mommie Dearest} making it into the top ten movies.

Of course, rankings are not the whole story.  The signs are very
consistent\footnotemark\ between methods.  \emph{Mommie Dearest} is
negative for every method considered and \emph{Fried Green Tomatoes}
is positive.  The sign consistency indicates that these components are
measuring essentially the same thing.  

The magnitude of the importance are also telling.  PCA assigns the
smallest weight to every movie, with the exception of the second
component of \sph.  This indicates that the robust methods are willing
to assign more importance to discriminating movies.

\footnotetext{Since components are only defined up to a sign, we mean
  that the sign pattern in Tables~\ref{tab:movielens1st} and
  \ref{tab:movielens2nd} are equivalent modulo multiplication by
  $-1$.}













\section*{Acknowledgments}
\label{ack}
The first author would like to thank Alex Gittens, Richard Chen, and
Stephen Becker for valuable discussions regarding this work.

\appendix

\section{Proof of Theorem~\ref{thm:mdr}}\label{sec:mdrproof}
This appendix contains the proof of Theorem~\ref{thm:mdr} that we
repeat below as Theorem~\ref{thm:mdrapp}.  We begin with some
supporting results.  The following result of Alon and
Naor~\cite[Sec. 4.2]{Alon2006} allows us to bound the expectation of
$\lone{\mat X \vec v_\opt}$ below.  The essence of this result goes
back to a 1953 paper of Grothendieck~\cite{grothendieck1953résumé};
see also the little Grothendieck theorem in~\cite[Sec. 5b]{Pisier1986a}.
\begin{lemma}\label{lemma:alon-naor}
  Let $\alpha_\opt^2$ be the value of the optimization
  problem~\eqref{eq:mdrsdp} of Algorithm~\ref{alg:mdr}.  Then
  $\alpha_\opt^2 \ge \norm{\mat X \mat X^{\ad}}_{\infty\to 1}$.
  Moreover, let $\vec y^{(k)}$ be one of the vectors generated in
  Step~\ref{step:mdr_random}. Then $\xpcd \ltwo{ \mat X^{\ad} \vec
    y^{(k)}}^2 \ge \frac{2}{\pi}\alpha_\opt^2$.
\end{lemma}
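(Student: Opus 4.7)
The plan is to establish the two conclusions separately: the first is a soft relaxation argument, while the second reduces to a classical identity of Grothendieck combined with Nesterov's $\pi/2$ theorem for quadratic forms over PSD matrices.

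For the first inequality, I would observe that the $\ell_\infty \to \ell_1$ norm of a symmetric PSD matrix can be written as a maximum over sign vectors:
\begin{equation*}
\norm{\mat X \mat X^\ad}_{\infty\to 1} = \max_{\vec y \in \{\pm 1\}^n} \vec y^\ad \mat X \mat X^\ad \vec y,
\end{equation*}
as used already in~\eqref{eq:2to1->1toinf}. For any such $\vec y$, the rank-one matrix $\mat Z = \vec y\vec y^\ad$ is PSD with unit diagonal and satisfies $\trace(\mat X \mat X^\ad \mat Z) = \vec y^\ad \mat X\mat X^\ad \vec y$, so it is feasible for~\eqref{eq:mdrsdp}. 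Taking the maximum over such $\vec y$ yields $\norm{\mat X\mat X^\ad}_{\infty\to 1}\le \alpha_\opt^2$.

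For the second inequality, the constraint $[\mat Z_\opt]_{ii}=1$ forces each row $\vec r_i$ of the factor $\mat R_\opt$ to have unit Euclidean norm. Since $\vec g^{(k)}$ is a standard Gaussian in the ambient space, the pair $(\vec r_i^\ad \vec g^{(k)},\vec r_j^\ad \vec g^{(k)})$ is jointly Gaussian with correlation $\Inner{\vec r_i,\vec r_j}=[\mat Z_\opt]_{ij}$. Grothendieck's identity then gives
\begin{equation*}
\xpcd\bigl[y_i^{(k)}y_j^{(k)}\bigr] = \xpcd\bigl[\sign(\vec r_i^\ad \vec g^{(k)})\sign(\vec r_j^\ad\vec g^{(k)})\bigr] = \tfrac{2}{\pi}\arcsin\bigl([\mat Z_\opt]_{ij}\bigr).
\end{equation*}
Writing the quadratic form as a trace and summing,
\begin{equation*}
\xpcd \ltwo{\mat X^\ad \vec y^{(k)}}^2 = \xpcd\bigl[(\vec y^{(k)})^\ad \mat X\mat X^\ad \vec y^{(k)}\bigr] = \tfrac{2}{\pi}\Inner{\mat X\mat X^\ad,\,\arcsin(\mat Z_\opt)},
\end{equation*}
where $\arcsin$ acts entrywise on $\mat Z_\opt$.

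The main obstacle is to replace $\arcsin(\mat Z_\opt)$ by $\mat Z_\opt$ and collect the factor $\alpha_\opt^2 = \Inner{\mat X\mat X^\ad,\mat Z_\opt}$. I would invoke Nesterov's $\pi/2$ theorem: for any PSD matrix $\mat A$ and any PSD matrix $\mat Z$ with $\abs{Z_{ij}}\le 1$, one has $\Inner{\mat A,\arcsin(\mat Z)}\ge \Inner{\mat A,\mat Z}$. The proof expands $\arcsin(t)-t=\sum_{k\ge 1} c_k t^{2k+1}$ with $c_k\ge 0$, so the difference matrix is a nonnegative combination of odd Hadamard powers $\mat Z^{\circ(2k+1)}$, which are PSD by the Schur product theorem; since $\mat A\ggeq \mathbf 0$ as well, the inner product of the two PSD matrices is nonnegative. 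Applying this with $\mat A=\mat X\mat X^\ad$ and $\mat Z=\mat Z_\opt$ yields
\begin{equation*}
\xpcd\ltwo{\mat X^\ad \vec y^{(k)}}^2 \ge \tfrac{2}{\pi}\Inner{\mat X\mat X^\ad,\mat Z_\opt} = \tfrac{2}{\pi}\alpha_\opt^2,
\end{equation*}
which is the desired bound.
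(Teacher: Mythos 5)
Your proof is correct and complete. The paper itself does not prove this lemma---it simply cites Alon and Naor~\cite[Sec.~4.2]{Alon2006}---and your argument is precisely the standard one found there: feasibility of the rank-one sign matrices $\vec y\vec y^\ad$ for the first claim, and Grothendieck's identity $\xpcd[y_i y_j]=\tfrac{2}{\pi}\arcsin([\mat Z_\opt]_{ij})$ combined with Nesterov's $\pi/2$ argument (nonnegative Taylor coefficients of $\arcsin(t)-t$ plus the Schur product theorem) for the second. The only step worth stating explicitly is that $\abs{[\mat Z_\opt]_{ij}}\le 1$, which follows from the unit diagonal and positive semidefiniteness of $\mat Z_\opt$ (equivalently, from the unit norm of the rows $\vec r_i$), so that the entrywise $\arcsin$ is well defined; you implicitly have this.
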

The claim $\alpha_\opt^2 \ge \norm{\mat X \mat X^\ad}_{\infty\to 1}$
also follows from our discussion of the SDP relaxation in
Section~\ref{sec:sdprelax}.  We also need the following proposition.
\begin{proposition}\label{prop:inf1factorapp}
  For each matrix $\mat X$, the identity $\norm{\mat X\mat
    X^{\ad}}_{\infty\to 1} = \norm{\mat X}_{2 \to 1}^2$ holds.
\end{proposition}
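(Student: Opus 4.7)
The plan is to reduce both sides to the same quadratic form via duality.

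First I would use the duality relation for operator norms quoted in the excerpt, namely $\norm{\mat X^\ad}_{q^\ast \to p^\ast} = \norm{\mat X}_{p \to q}$ with $p = 2$ and $q = 1$, to obtain $\norm{\mat X}_{2 \to 1} = \norm{\mat X^\ad}_{\infty \to 2}$. Squaring and unpacking the definition gives
\begin{equation*}
  \norm{\mat X}_{2 \to 1}^2 \;=\; \sup_{\lp{\infty}{\vec y} = 1} \ltwo{\mat X^\ad \vec y}^2 \;=\; \sup_{\lp{\infty}{\vec y} = 1} \vec y^\ad \mat X \mat X^\ad \vec y.
\end{equation*}

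Next I would rewrite the left-hand side similarly. Using $\lone{\vec w} = \sup_{\lp{\infty}{\vec z} \le 1} \vec z^\ad \vec w$ together with the definition of $\norm{\cdot}_{\infty \to 1}$ yields
\begin{equation*}
  \norm{\mat X \mat X^\ad}_{\infty \to 1} \;=\; \sup_{\lp{\infty}{\vec y},\, \lp{\infty}{\vec z} \le 1} \vec z^\ad \mat X \mat X^\ad \vec y.
\end{equation*}
The task is therefore to show that, for the symmetric positive semidefinite matrix $\mat A = \mat X \mat X^\ad$, the bilinear supremum over the cube equals the quadratic supremum on its diagonal.

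The main step (and the only nontrivial one) is the inequality $\sup_{\lp{\infty}{\vec y},\lp{\infty}{\vec z} \le 1} \vec z^\ad \mat A \vec y \le \sup_{\lp{\infty}{\vec y} \le 1} \vec y^\ad \mat A \vec y$. I would obtain it from the Cauchy--Schwarz inequality applied to the positive semidefinite form $\langle \vec u, \vec v \rangle_{\mat A} = \vec u^\ad \mat A \vec v$, which gives $|\vec z^\ad \mat A \vec y| \le \sqrt{\vec z^\ad \mat A \vec z}\,\sqrt{\vec y^\ad \mat A \vec y} \le \max\{\vec y^\ad \mat A \vec y,\,\vec z^\ad \mat A \vec z\}$. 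The reverse inequality is immediate by taking $\vec z = \vec y$. Combining with the expression for $\norm{\mat X}_{2 \to 1}^2$ above yields $\norm{\mat X \mat X^\ad}_{\infty \to 1} = \norm{\mat X}_{2 \to 1}^2$, as claimed. I do not expect any real obstacle here; the only subtlety is remembering that Cauchy--Schwarz is valid for the possibly degenerate form $\langle \cdot, \cdot \rangle_{\mat A}$ because $\mat A \succeq \mathbf{0}$.
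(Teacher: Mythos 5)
Your proof is correct and takes essentially the same route as the paper's: both reduce $\norm{\mat X\mat X^\ad}_{\infty\to 1}$ to a bilinear supremum over the cube and use Cauchy--Schwarz to show the supremum is attained on the diagonal $\vec z = \vec y$, then identify the result with $\norm{\mat X^\ad}_{\infty\to 2}^2 = \norm{\mat X}_{2\to 1}^2$ via adjoint duality. Your Cauchy--Schwarz for the semidefinite form $\langle\cdot,\cdot\rangle_{\mat A}$ is the same inequality as the paper's Euclidean one applied to $\mat X^\ad\vec z$ and $\mat X^\ad\vec y$, since $\sqrt{\vec y^\ad \mat A\vec y}=\ltwo{\mat X^\ad\vec y}$; if anything, your write-up makes the diagonalization step (via $\sqrt{ab}\le\max\{a,b\}$) more explicit than the paper's brief appeal to the equality conditions.
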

\begin{proof}
  We can express
  \[
  \norm{\mat X\mat X^{\ad} }_{\infty \to 1} = \max_{\substack{\norm{\vec
        w}_\infty = 1\\ \norm{\vec y}_\infty = 1}} \Inner{\mat X^{\ad}
    \vec w,\mat X^{\ad}\vec y}.
  \]
  By the conditions for equality in the Cauchy--Schwarz inequality, it
  follows that we can take $\vec w = \vec y$ above.  Hence  
  \begin{equation*}
    \norm{\mat X \mat X^{\ad}}_{\infty \to 1}= \norm{\mat X^{\ad}}_{\infty \to
      2}^2 = \norm{\mat X}_{1\to 2},
  \end{equation*}
  where the last equality is a standard fact concerning adjoint
  operators.
\end{proof}

We use the following variant of the Paley--Zygmund integral
inequality~\cite{Paley1932} to bound the probability that $\lone{\mat
  X \vec v_\opt}$ is less than its expectation.
\begin{lemma}\label{lemma:pz}
  Suppose $Z$ is a random variable such that $0\le Z \le C$ for some
  $C>0$. Then, for any scalar $\theta \in [0,1]$, we have $ \prob(Z >
  \theta \xpcd [Z] ) \ge C^{-1}(1-\theta) \xpcd [Z].  $
\end{lemma}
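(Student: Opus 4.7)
The plan is a direct one-line argument using the decomposition $\expected[Z] = \expected[Z \mathbf{1}_A] + \expected[Z \mathbf{1}_{A^c}]$, where $A$ is the event $\{Z > \theta \expected[Z]\}$ whose probability we wish to bound from below. This is a sharper variant of the usual Paley--Zygmund inequality, and the uniform bound $Z \le C$ is exactly what lets us avoid the second-moment term that would otherwise produce a $(1-\theta)^2$ factor.

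The key steps I would carry out, in order, are as follows. First, on the complement $A^c$ we have $Z \le \theta \expected[Z]$ pointwise by definition of $A$, so $\expected[Z \mathbf{1}_{A^c}] \le \theta \expected[Z] \cdot \prob(A^c) \le \theta \expected[Z]$. Second, on $A$ we have $Z \le C$ pointwise by hypothesis, so $\expected[Z \mathbf{1}_A] \le C \prob(A)$. Adding these two bounds gives
\begin{equation*}
\expected[Z] \;\le\; C \prob(A) + \theta \expected[Z],
\end{equation*}
which rearranges to $\prob(A) \ge C^{-1}(1-\theta) \expected[Z]$, as claimed.

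There is no real obstacle here; the only thing to be careful about is handling the degenerate case $\expected[Z] = 0$, which reduces to the trivial inequality $\prob(A) \ge 0$, and ensuring the bound on $A^c$ uses $\prob(A^c) \le 1$ (rather than an exact value) so that the rearrangement isolates $\prob(A)$ cleanly.
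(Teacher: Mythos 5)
Your proof is correct and is essentially identical to the paper's: both split $\xpcd[Z]$ over the event $\{Z > \theta \xpcd[Z]\}$ and its complement, bound the two pieces by $C\,\prob(Z > \theta\xpcd[Z])$ and $\theta\xpcd[Z]$ respectively, and rearrange. Your explicit handling of the degenerate case $\xpcd[Z]=0$ is a minor added nicety, not a different argument.
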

\begin{proof}
  Split the integral $\xpcd[ Z]$ into two integrals, the first over the
  region $Z \le \theta \xpcd[ Z]$ and the second over the region $Z >
  \theta \xpcd [Z]$. Notice that the former integral is bounded above by
  $\theta \xpcd [Z]$, while the latter integral is bounded above by
  $C \prob(Z > \theta \xpcd [Z])$. Simple algebraic
  manipulation then shows the claim.
\end{proof}

We now restate and prove the main Theorem of Section~\ref{sec:mdr}.
\begin{theorem}\label{thm:mdrapp}
  Suppose that $\mat X$ is an $n\times p$ matrix, and let $K$ be the
  number of rounding trials.  Let $(\vec v_\opt, \alpha_\opt)$ be the
  output of Algorithm~\ref{alg:mdr}.  Then $\alpha_\opt \ge \norm{\mat
    X}_{2\to 1}$. Moreover, for $\theta \in [0,1]$, the inequality
  \begin{equation*}
    \lone{\mat X \vec v_\opt } > \theta \sqrt{\frac{2}{\pi}}
    \alpha_\opt    
  \end{equation*}
  holds except with probability $\mathrm{e}^{-2K(1-\theta^2)/\pi}$.
\end{theorem}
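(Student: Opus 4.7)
The proof naturally splits into four steps. First, I would dispose of the deterministic claim $\alpha_\opt \ge \lone{\mat X}_{2\to 1}$ by combining the lower bound $\alpha_\opt^2 \ge \norm{\mat X\mat X^\ad}_{\infty \to 1}$ from Lemma~\ref{lemma:alon-naor} with the identity $\norm{\mat X \mat X^\ad}_{\infty\to 1} = \norm{\mat X}_{2\to 1}^2$ from Proposition~\ref{prop:inf1factorapp}. Taking square roots gives the desired bound.

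Next, I would relate the random output $\vec v^{(k)} = \mat X^\ad \vec y^{(k)}/\ltwo{\mat X^\ad \vec y^{(k)}}$ of Step~\ref{step:mdr_random}(b) to the quantity $\ltwo{\mat X^\ad \vec y^{(k)}}$ that the Alon--Naor bound actually controls. Since $\vec y^{(k)} \in \{\pm 1\}^n$, duality gives $\linf{\vec y^{(k)}}=1$, so
\[
\lone{\mat X \vec v^{(k)}} \ge \Inner{\vec y^{(k)}, \mat X \vec v^{(k)}}
= \frac{\Inner{\mat X^\ad \vec y^{(k)}, \mat X^\ad \vec y^{(k)}}}{\ltwo{\mat X^\ad \vec y^{(k)}}}
= \ltwo{\mat X^\ad \vec y^{(k)}}.
\]
Squaring, and taking expectations, Lemma~\ref{lemma:alon-naor} yields $\xpcd\lone{\mat X \vec v^{(k)}}^2 \ge (2/\pi)\alpha_\opt^2$. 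I would also note the deterministic upper bound $\lone{\mat X \vec v^{(k)}} \le \norm{\mat X}_{2\to 1}\cdot \ltwo{\vec v^{(k)}} \le \alpha_\opt$, using the first claim of the theorem.

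The third step is to apply the Paley--Zygmund-type Lemma~\ref{lemma:pz} to the bounded nonnegative random variable $Z^{(k)} = \lone{\mat X \vec v^{(k)}}^2$, whose range is contained in $[0,\alpha_\opt^2]$. Choosing the threshold parameter $\theta^2$ yields
\[
\prob\!\left(Z^{(k)} > \theta^2\, \xpcd Z^{(k)}\right) \ge \alpha_\opt^{-2}(1-\theta^2)\,\xpcd Z^{(k)} \ge \frac{2}{\pi}(1-\theta^2),
\]
and since $\xpcd Z^{(k)} \ge (2/\pi)\alpha_\opt^2$, the event $\{Z^{(k)} > \theta^2 \xpcd Z^{(k)}\}$ is contained in $\{\lone{\mat X \vec v^{(k)}} > \theta \sqrt{2/\pi}\,\alpha_\opt\}$. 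Finally, since the $K$ trials are independent and $\vec v_\opt$ achieves the maximum value of $\lone{\mat X \vec v^{(k)}}$ by Step~\ref{step:mdr_max}, the failure probability is at most
\[
\left(1 - \tfrac{2}{\pi}(1-\theta^2)\right)^K \le \mathrm{e}^{-2K(1-\theta^2)/\pi},
\]
using $1 - x \le \mathrm{e}^{-x}$.

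The main obstacle I anticipate is the second step: one must notice that although $\vec v^{(k)}$ is defined by an $\ell_2$ normalization, the rounding step $\vec y^{(k)}=\sign(\mat R_\opt\vec g^{(k)})$ produces a sign vector that, when paired against $\mat X\vec v^{(k)}$, converts the $\ell_2$-controlled quantity $\ltwo{\mat X^\ad \vec y^{(k)}}$ into a lower bound for the $\ell_1$-objective $\lone{\mat X\vec v^{(k)}}$. Everything else is a mechanical combination of Lemmas~\ref{lemma:alon-naor} and~\ref{lemma:pz} together with Proposition~\ref{prop:inf1factorapp}.
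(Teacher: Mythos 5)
Your proposal is correct and follows essentially the same route as the paper: the key inequality $\lone{\mat X \vec v^{(k)}} \ge \ltwo{\mat X^\ad \vec y^{(k)}}$ via the sign vector, the Alon--Naor bound for the expectation, the Paley--Zygmund lemma at threshold $\theta^2$, and the independence of the $K$ trials. The only (harmless) cosmetic difference is that you take the upper bound $C = \alpha_\opt^2$ in the Paley--Zygmund step, whereas the paper uses $C = \norm{\mat X}_{2\to 1}^2$ and then invokes $\alpha_\opt \ge \norm{\mat X}_{2\to 1}$ to absorb the resulting ratio; both give the same constant $\tfrac{2}{\pi}(1-\theta^2)$.
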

\begin{proof}
  Let $\vec y\in\{\pm 1\}^n$ be a sign vector and define $\vec v =
  \mat X^* \vec y / \ltwo{\mat X^{\ad} \vec y}$.  Then
  \[
  \lone{\mat X \vec v} = \ltwo{\mat X^{\ad}\vec y}^{-1}\max_{\vec w \in \{\pm
    1\}^n}\Inner{\vec w,\mat X \mat X^{\ad} \vec y} \ge \ltwo{\mat X^{\ad}
    \vec y}
  \]
  where the inequality follows by taking the specific choice $\vec w =
  \vec y$.  In particular, this relation implies that the vectors
  $\vec v^{(k)} = \mat X^* \vec y^{(k)}/\ltwo{\mat X^* \vec y^{(k)}}$
  generated in Step~\ref{step:mdr_random} of Algorithm~\ref{alg:mdr}
  satisfy
  \begin{align}\label{eq:key}
    \xpcd \bigl\|\mat X\vec v^{(k)}\bigr\|_1^2 \ge \xpcd \bigl\|\mat
      X^\ad \vec y^{(k)}\bigr\|_2^2 &\ge \frac{2}{\pi} \alpha_\opt^2,
  \end{align}
  where the last inequality follows from the second claim in
  Lemma~\ref{lemma:alon-naor}.
  
  Since $\|{\vec v^{(k)}}\|_2 = 1$, the quantity $\|{\mat X \vec
    v^{(k)}}\|_1^2$ is a positive random variable bounded above by
  $\norm{\mat X}_{2\to 1}^2$. Therefore, inequality~\eqref{eq:key} and
  Lemma~\ref{lemma:pz} imply that
  \begin{align}
    \prob\left(\bigl\|\mat X\vec v^{(k)}\bigr\|_1^2 >
      \theta^2\cdot \frac{2\alpha_\opt^2}{\pi}\right) &
    \ge(1-\theta^2)\frac{2}{\pi}\cdot \left(\frac{\alpha_\opt
      }{\norm{\mat X}_{2\to 1}}\right)^2 \ge
    \frac{2}{\pi}\cdot(1-\theta^2),
  \end{align}
  where we have used the fact that $\alpha_\opt \ge \norm{\mat
    X}_{2\to 1}$ by Proposition~\ref{prop:inf1factorapp} and the first
  claim of Lemma~\ref{lemma:alon-naor}.
  
  In Step~\ref{step:mdr_max} of the algorithm we have chosen $\vec
  v_\opt$ to maximize $\|\mat X \vec v_\opt\|_1^2$, so the inequality
  $\lone{\mat X \vec v_\opt}^2 \le 2(1-\theta^2)/\pi$ holds if and
  only if $\|\mat X \vec v^{(k)}\|_1\le 2(1-\theta^2)/\pi$ for all
  $k$. Therefore, the independence of $\vec v^{(k)}$ for $k =
  1,\dotsc, K$ implies
  \[ 
  \prob\left(\lone{\mat X\vec v_\opt} \le \theta
  \sqrt{\frac{2}{\pi}}\norm{\mat X}_{2\to 1} \right)  \le \left( 1 -
  \frac{2}{\pi}\cdot(1-\theta^2) \right)^K < \mathrm{e}^{-2 K (1-\theta^2) /\pi},
  \]
  which completes the claim.
\end{proof}

\bibliographystyle{abbrv} 
\bibliography{library}

\end{document}